\documentclass[reqno,12pt,letterpaper]{amsart}
\PassOptionsToPackage{usenames,dvipsnames}{xcolor}
\usepackage{amsmath,amssymb,amsthm,graphicx,mathrsfs,url,bbm,cases,enumitem,tikz-cd}
\usepackage[colorlinks=true,linkcolor=Red,citecolor=Green]{hyperref}
\usepackage{amsxtra}
\usepackage[toc,page]{appendix}
\usepackage{wasysym} 
\usepackage{graphicx}
\usepackage{subcaption}
\usepackage{comment}
\def\arXiv#1{\href{http://arxiv.org/abs/#1}{arXiv:#1}}

\usepackage{soul}

\setlength{\marginparwidth}{0.2in}

\usepackage{mathtools}

\def\?[#1]{\textbf{[#1]}\marginpar{\Large{\textbf{??}}}}

\def\smallsection#1{\smallskip\noindent\textbf{#1}.}
\let\epsilon=\varepsilon 

\setlength{\textheight}{9.00in} \setlength{\oddsidemargin}{0.00in}
\setlength{\evensidemargin}{0.00in} \setlength{\textwidth}{6.50in}
\setlength{\topmargin}{0.00in} \setlength{\headheight}{0.18in}
\setlength{\marginparwidth}{0.8in}
\usepackage{CJKutf8}
\setlength{\abovedisplayskip}{0.2in}
\setlength{\belowdisplayskip}{0.2in}
\setlength{\parskip}{0.05in}

\newcommand{\RR}{{\mathbb R}}
\newcommand{\NN}{{\mathbb N}}
\newcommand{\CC}{{\mathbb C}}
\newcommand{\TT}{{\mathbb T}}
\newcommand{\ZZ}{{\mathbb Z}}




\DeclareGraphicsRule{*}{mps}{*}{}

\newtheorem{theo}{Theorem}
\newtheorem{prop}{Proposition}[section]	
\newtheorem{defi}[prop]{Definition}

\newtheorem{qu}{Question}

\newtheorem*{assumption*}{Assumption}
\newtheorem{lemm}[prop]{Lemma}
\newtheorem{corr}[prop]{Corollary}
\newtheorem{rem}{Remark}

\numberwithin{equation}{section}

\DeclareMathOperator{\loc}{loc}

\DeclareMathOperator{\supp}{supp}

\DeclareMathOperator{\tr}{tr}

\usepackage{scalerel}

\newcommand\reallywidehat[1]{\arraycolsep=0pt\relax%
\begin{array}{c}
\stretchto{
  \scaleto{
    \scalerel*[\widthof{\ensuremath{#1}}]{\kern-.5pt\bigwedge\kern-.5pt}
    {\rule[-\textheight/2]{1ex}{\textheight}} 
  }{\textheight} %
}{0.5ex}\\           
#1\\                 
\rule{-1ex}{0ex}
\end{array}
}
\author[S.\,Becker]{Simon Becker}
\address[Simon Becker]{ETH Zurich, 
Institute for Mathematical Research, 
R\"amistrasse 101, 8092 Zurich, 
Switzerland}
\email{simon.becker@math.ethz.ch}

\author[Z.\,Tao]{Zhongkai Tao}
\address[Zhongkai Tao]{Department of Mathematics, University of California, Berkeley, CA 94720, USA}
\email{ztao@math.berkeley.edu}

\author[M.\,Yang]{Mengxuan Yang}
\address[Mengxuan Yang]{Department of Mathematics, University of California, Berkeley, CA 94720, USA}
\email{mxyang@math.berkeley.edu}

\title[Wannier decay]{Wannier decay and the Thouless conjecture}

\begin{document}
\begin{abstract}
Non-trivial Chern classes pose an obstruction to the existence of exponentially decaying Wannier functions which provide natural bases for spectral subspaces. For non-trivial Bloch bundles, we obtain decay rates of Wannier functions in dimensions $d=2,3$. For $d=2$, we construct Wannier functions with full asymptotics and optimal decay rate $\mathcal{O}(|x|^{-2})$ as conjectured by Thouless; for $d=3$, we construct Wannier functions with the uniform decay rate $\mathcal{O}(|x|^{-7/3})$.
\end{abstract}
\maketitle

\section{Introduction}
\label{sec:intro}
For periodic systems in quantum mechanics, topological phases are characterized by Chern classes of the corresponding bundles of Bloch eigenfunctions. Vanishing of Chern classes is equivalent to the existence of exponentially decaying Wannier functions (cf.~Brouder~et~al. \cite{br} and Monaco~et~al.~\cite{MPPT18}). 
It is natural to consider the optimal decay rate of Wannier functions for periodic systems with non-trivial topology. For topologically non-trivial two-dimensional periodic systems, Thouless \cite{T84} conjectured that there exist Wannier functions with the precise decay rate $\mathcal{O}(|x|^{-2})$ as $\vert x \vert \to \infty$.

In this paper, we prove Thouless' conjecture by explicitly constructing Wannier functions with full asymptotics and decay rate $\mathcal{O}(|x|^{-2})$. In addition, for $d=3$, we extend our method to obtain Wannier functions with two different decay rates: a uniform decay rate $\mathcal{O}(|x|^{-7/3})$ and an inhomogeneous decay rate $\mathcal{O}((1+|x_1|+|x_2|)^{-2} |x_3|^{-\infty})$. 

\subsection{Backgrounds and main results}
We start by introducing some general concepts.
\begin{assumption*}
\label{ass:proj}
Let $\Gamma$ be a lattice in $\mathbb R^d$ and $\mathcal P:=(P(k))_{k \in \mathbb R^d}$  a family of orthogonal projections with finite rank $r$ acting on $\mathcal H = L^2(\mathbb R^d / \Gamma)$. We assume  real analytic dependence on $k$,   $ P(k)\mathcal{H} \subset C^{\infty} (\mathbb R^d / \Gamma)$ and that for all $k \in \mathbb R^d$ and $\gamma \in \Gamma^*$
\begin{equation}
\tag{A}
\label{eq:t-equivariance}
P(k+\gamma) =\tau(\gamma)^{-1}P(k)\tau(\gamma), \quad \text{ where $ \tau (\gamma) $
is a unitary operator}. 
\end{equation}
\end{assumption*}
Perhaps the most important examples of $P(k)$ are spectral projections to $r$ isolated bands of periodic Hamiltonians with smooth coefficients and $\tau(\gamma):=e^{-i\langle \gamma,x \rangle}$. 

Projections as in Assumption \eqref{eq:t-equivariance} induce a vector bundle via the equivalence relation
\[ (k,\varphi) \sim (k',\varphi') \Leftrightarrow (k',\varphi')=(k+\gamma,\tau(\gamma)\varphi),\]
with the total space \[E:=\{ (k,\varphi) \in \mathbb R^d \times \operatorname{ran}(P(k))\} /{\sim} \] and the base space $M:=\mathbb R^d/\Gamma^*\cong \TT^d$. Thus the projection map $\pi: E\to M$ induces a complex vector bundle $\mathcal E$, which is called the \textbf{Bloch bundle}. 

A \textbf{Wannier basis} of the Bloch bundle $\mathcal{E}$ is a set of orthonormal functions
\begin{equation}
    \label{eq:wb}
     \mathcal W := \bigcup_{a=1}^r \bigcup_{\gamma\in\Gamma} \big\{ \psi_a ( x - \gamma ) \big\} 
\end{equation}
such that $\mathcal W$ is an orthonormal basis of the range of the projection $\mathcal{P}$ with rank $r$. The topology of a Bloch bundle $\mathcal{E}$ for $d = 2,3$ is characterized by the first Chern class $c_1(\mathcal{E})$. It is well known that the existence of exponentially decaying Wannier functions is equivalent to the topological triviality ($c_1(\mathcal{E})=0$) of the Bloch bundle \cite{ne,P07,MPPT18}. See Section \ref{sec:vb} for a detailed discussion.  

For non-trivial Bloch bundles, we consider the decay rate of functions $\{\psi_a(x)\}$ in a Wannier basis $\mathcal{W}$. Our first result resolves Thouless' conjecture by explicitly constructing Wannier functions with full asymptotics in dimension two:
\begin{theo}
\label{theo:thouless}
    Let $\mathcal E$ be a Bloch bundle of rank $r$ with Chern number $c_1(\mathcal{E}) = m \neq 0$ over $\mathbb T^2$, then there exists a Wannier basis $\mathcal{W} = \{\psi_a(x-\gamma): \gamma\in\Gamma, a\in\{1,\cdots, r\}\}$ such that $\psi_a(x)$ decays exponentially for $a \in\{1,\cdots, r-1\}$  and 
    \begin{equation*}
     \psi_r(x) \sim \frac{1}{|\RR^2/\Gamma^*|}\sum\limits_{\alpha\in\NN^2} 
    c_{m,\alpha}(\theta_x)
     \vert x \vert^{-2-|\alpha|}\partial_{k}^{\alpha}\Phi(0,x)
    \end{equation*}
    with the leading coefficient $c_{m,0}(\theta_x)=2\pi (-1)^m me^{-im \theta_x}$,
    $\sin\theta_x=x_1/|x|$, $\cos \theta_x =x_2/|x|$, and $\Phi(k,x)$ is a normalized smooth local section of $\mathcal E$ near $k=0$.
\end{theo}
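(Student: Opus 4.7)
The plan is to isolate the topological obstruction in a single line-bundle summand and compute the corresponding Wannier function as an oscillatory integral with an explicit phase vortex in $k$-space. Smooth complex vector bundles of rank $r$ over $\TT^2$ are classified by their first Chern number, so there is a smooth orthogonal decomposition $\mathcal{E} \cong L \oplus F$ with $L$ a line bundle satisfying $c_1(L)=m$ and $F$ a trivial bundle of rank $r-1$. Using that $P(k)$ is real analytic in $k$, one can arrange this splitting so that $F$ admits a real analytic global orthonormal frame respecting \eqref{eq:t-equivariance}. The Wannier functions $\psi_1,\dots,\psi_{r-1}$ produced from this frame decay exponentially by a Paley--Wiener argument applied to the Bloch/Floquet transform, reducing the theorem to constructing the single function $\psi_r$ coming from $L$.

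Next, starting from any smooth normalized local section $\Phi(k,x)$ of $L$ in a neighborhood of $k=0$ (which exists by local triviality), I would build a global normalized section $\Psi(k,x)$ of $L$ compatible with \eqref{eq:t-equivariance}, smooth on $\TT^2 \setminus \{0\}$ and with a controlled phase singularity at the origin. Concretely, one glues $\Phi$ on a disc around $0$ to an outer trivialization of $L$ on $\TT^2\setminus\{0\}$ using a transition function of winding number $m$ on the boundary of the disc; this yields a decomposition
\[ \Psi(k,x) = \chi(k)\, e^{-im\theta_k}\,\Phi(k,x) + g(k,x), \]
where $\chi$ is a cutoff supported near $k=0$ and $g$ is a section of $L$ that is real analytic in $k$ on all of $\TT^2$. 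Orthogonality of $\Psi$ to the Wannier sections of the previous step is automatic from the $L \oplus F$ splitting.

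With the Bloch transform normalization $\psi_r(x) = \frac{1}{|\RR^2/\Gamma^*|}\int_{\RR^2/\Gamma^*} e^{ik\cdot x}\, \widetilde\Psi(k,x)\, dk$, where $\widetilde\Psi(k,x) := e^{-ik\cdot x}\Psi(k,x)$ is $\Gamma^*$-periodic in $k$, the smooth part $g$ contributes an exponentially decaying term by Paley--Wiener. For the vortex part, rescale $k = y/|x|$ and Taylor expand $\Phi(k,x) = \sum_{\alpha\in\NN^2} \frac{k^\alpha}{\alpha!}\, \partial_k^\alpha \Phi(0,x)$; the integral becomes
\[ \int \chi(k)\, e^{ik\cdot x}\, e^{-im\theta_k}\, \Phi(k,x)\, dk \sim \sum_{\alpha\in\NN^2} \frac{\partial_k^\alpha \Phi(0,x)}{\alpha!}\, |x|^{-2-|\alpha|}\, c_{m,\alpha}(\theta_x), \]
where $c_{m,\alpha}(\theta_x) := \int_{\RR^2} e^{iy\cdot \hat x}\, y^\alpha\, e^{-im\theta_y}\, dy$ is the distributional Fourier transform of $y^\alpha e^{-im\theta_y}$ evaluated at $\hat x = x/|x|$. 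A Jacobi--Anger expansion reduces these to regularized Bessel integrals; in particular, the identity $\int_0^\infty r\, J_m(r|x|)\, dr = m/|x|^2$ (obtained by analytic continuation of the Weber--Schafheitlin formula) yields the leading coefficient $c_{m,0}(\theta_x) = 2\pi (-1)^m m\, e^{-im\theta_x}$, with the factor $m$ originating from that Bessel integral.

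The most delicate point will be achieving the clean decomposition in the second stage: the remainder $g$ must be not merely $C^\infty$ but real analytic in $k$, so that its contribution to $\psi_r$ is genuinely exponentially decaying rather than only rapidly decreasing. This requires controlling analytic dependencies in the gluing construction, leveraging the real analyticity of $P(k)$ together with a careful choice of the section on $\TT^2 \setminus \{0\}$. A related technical obstacle is the rigorous justification of the termwise asymptotics in the last stage: the rescaling $k = y/|x|$ turns a bounded domain of integration into an expanding one, so one needs tail estimates for the oscillatory integral together with a careful interpretation of the distributional transforms $c_{m,\alpha}(\theta_x)$ to extract the full asymptotic series with quantitative error control.
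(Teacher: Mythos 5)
Your proposal follows essentially the same route as the paper: split off a line bundle carrying the Chern number (the trivial rank-$(r-1)$ summand giving exponentially decaying Wannier functions via an analytic frame), build a normalized section of the line bundle with an $e^{im\theta_k}$ phase vortex at a single point of the dual torus, and extract the full asymptotics of $\psi_r$ from the regularized Bessel/vortex integral $\int_0^\infty \rho\, J_m(\rho)\,d\rho=m$ together with a Taylor expansion of the local frame $\Phi(k,x)$ at $k=0$. The one place you overshoot is the demand that the remainder $g$ be real analytic: since the theorem asserts only an asymptotic expansion for $\psi_r$ (exponential decay is claimed solely for the $r-1$ functions coming from the trivial summand), a merely smooth $g$ --- whose contribution is $\mathcal O(|x|^{-\infty})$ by non-stationary phase and is therefore absorbed into the $\sim$ --- suffices, and this is exactly what the paper's gluing-and-mollifying construction produces.
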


Our second result gives a Wannier decay rate for non-trivial Bloch bundles over $\TT^3$:
\begin{theo}
\label{theo:becker}
    Let $\mathcal E$ be a non-trivial Bloch bundle of rank $r$ over $\TT^3$, then there exists a Wannier basis $\mathcal{W} = \{\psi_a(x-\gamma): \gamma\in\Gamma, a\in\{1,\cdots, r\}\}$ such that $\psi_a(x)$ decays exponentially for $a \in\{1,\cdots, r-1\}$  and 
    \[\psi_r(x) = \mathcal O(\vert x \vert^{-\frac{7}{3}}), \text{ as }\vert x \vert \to \infty.\]
\end{theo}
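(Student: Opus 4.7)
The plan is to reduce to a line bundle and then estimate the Fourier transform of a section whose singular set is a ``vortex curve'' $\Sigma\subset\TT^3$, combining the two-dimensional vortex decay transverse to $\Sigma$ with a van der Corput estimate along $\Sigma$.

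\textbf{Step 1 (Reduction to rank one).} Since $\TT^3$ is a $3$-complex, complex vector bundles of rank $r\geq 2$ on it split topologically as $\mathcal E\cong\det\mathcal E\oplus\CC^{r-1}$; this splitting can be arranged smoothly and $\tau$-equivariantly on the Bloch bundle, producing a smooth orthogonal decomposition $P(k)=P_0(k)+P_{\mathcal L}(k)$ with $P_0$ trivial of rank $r-1$ and $\mathcal L:=\operatorname{ran} P_{\mathcal L}$ a line bundle satisfying $c_1(\mathcal L)=c_1(\mathcal E)\neq 0$. On $\operatorname{ran}P_0$ the standard smooth-frame construction yields $r-1$ exponentially decaying Wannier functions $\psi_1,\ldots,\psi_{r-1}$. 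It therefore suffices to build a single Wannier function $\psi_r$ of $\mathcal L$ with decay $\mathcal O(|x|^{-7/3})$.

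\textbf{Step 2 (Section singular along a nondegenerate curve).} Represent $c_1(\mathcal L)\in H^2(\TT^3;\ZZ)$ as the Poincar\'e dual of a smoothly embedded closed oriented curve $\Sigma\subset\TT^3$. A small generic perturbation of $\Sigma$ ensures nowhere-vanishing curvature $\kappa$ and torsion $\tau$. A standard clutching/connection construction produces a unit-modulus smooth section $u:\TT^3\setminus\Sigma\to S^1$ of $\mathcal L$ of winding number $m$ around $\Sigma$; in tubular coordinates $(s,r,\phi)$ with $s$ arclength on $\Sigma$ and $(r,\phi)$ polar normal coordinates, one has $u=e^{im\phi}(1+\mathcal O(r))$ near $\Sigma$, and $u$ is smooth elsewhere.

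\textbf{Step 3 (Fourier decay).} The Wannier function $\psi_r(x)$ reduces, as in the two-dimensional case, to oscillatory integrals of the form $I(x)=\int_{\TT^3} u(k)\,\Phi(k,x)\,e^{-ix\cdot k}\,dk$ with $\Phi$ smooth in $k$. Split $u=u_{\rm near}+u_{\rm far}$ via a tubular cutoff at scale $\delta\sim|x|^{-1}$. On $u_{\rm far}$, $N$ integrations by parts give $\mathcal O(|x|^{-N}\delta^{2-N})$, which for $N$ large is $o(|x|^{-7/3})$. On $u_{\rm near}$, work in tubular coordinates and expand
\[ x\cdot k = x\cdot\sigma(s)+r\,x\cdot n(s,\phi)+\mathcal O(r^2). \]
The $(r,\phi)$-integral is, to leading order, the Fourier transform of the planar vortex $e^{im\phi}$, which decays as $|x_T(s)|^{-2}$ in the projection $x_T(s)$ of $x$ onto the normal plane at $\sigma(s)$ (this is exactly the mechanism of Theorem~\ref{theo:thouless}). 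The remaining $s$-integral has phase $x\cdot\sigma(s)$, whose third derivative in the Frenet frame equals $x\cdot(\kappa^2 T-\kappa' N-\kappa\tau B)$; nonvanishing $\kappa\tau$ makes it uniformly nondegenerate, so van der Corput's lemma provides an additional factor $|x|^{-1/3}$ in the tangent-direction regime. For $x$ transverse to $\Sigma$ one has $|x_T|\gtrsim|x|$ and the bare $|x|^{-2}$ is already stronger than $|x|^{-7/3}$. Combining the two regimes gives $|I(x)|=\mathcal O(|x|^{-7/3})$.

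\textbf{Main obstacle.} The delicate point is uniformity of the estimate in the direction of $x$: when $x$ is nearly tangent to $\Sigma$ at some point $\sigma(s_0)$, $|x_T(s_0)|$ is small and the clean factorization of the near-integral into a transverse vortex times an $s$-oscillatory integral degenerates. The technical core of the proof is to splice, on a small $s$-window around $s_0$, the transverse vortex analysis with a van der Corput estimate in $s$ so that the loss from small $|x_T|$ is compensated exactly by the torsion-driven cube-root gain; verifying that these two mechanisms balance at precisely the exponent $7/3$, uniformly in the direction of $x$ and in the position $\sigma(s_0)$ along $\Sigma$, is the main difficulty.
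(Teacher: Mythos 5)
Your overall strategy---reduce to a line bundle, build a unit section singular along a nondegenerate closed curve, and combine two-dimensional vortex decay transverse to the curve with a van der Corput gain along it---is the paper's strategy. But the execution has genuine gaps, and the step you yourself flag as the ``main obstacle'' is exactly where the argument is left undone.

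First, the far-field estimate fails: with $\delta\sim|x|^{-1}$ the bound $\mathcal O(|x|^{-N}\delta^{2-N})$ equals $\mathcal O(|x|^{-2})$ for \emph{every} $N$, so a near/far splitting at scale $|x|^{-1}$ caps the total decay at $|x|^{-2}$ and can never reach $|x|^{-7/3}$. The paper uses no $x$-dependent cutoff; instead it Taylor-expands the smooth amplitude $\Phi(k,x)$ in the transverse variables about the curve, so that the leading term is \emph{exactly} a translated planar vortex times a function of the curve parameter, the terms of order $|\alpha|\ge 1$ gain extra powers of $|x|^{-1}$, and the $C^{N_0}$ remainder contributes $\mathcal O(|x|^{-N_0})$ outright. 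Second, the splicing of the transverse vortex decay with the torsion-driven cube-root gain in the near-tangent regime is precisely what you do not carry out, and in Frenet tubular coordinates the $\mathcal O(r)$ amplitude corrections, the $\mathcal O(r^2)$ phase error, and the rotation of the normal frame with $s$ all couple the two integrals. The paper dissolves this difficulty by a better choice of model: after an $\mathrm{SL}(3,\ZZ)$ normalization of $c_1$ it takes the singular curve to be the helix $\gamma(t)=(a\cos 2\pi t,\,a\sin 2\pi t,\,t)$ and the vortex profile to be $(k_1+ik_2-\gamma_1(t)-i\gamma_2(t))^m/|{\cdots}|^m$, i.e.\ translated in the \emph{fixed} $(k_1,k_2)$-plane rather than in the moving normal plane. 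The substitution $k=\gamma(t)+re^{i\theta}$ then splits the phase exactly and the oscillatory integral factors as
$\bigl(\int e^{im\theta}e^{ir(x_1\cos\theta+x_2\sin\theta)}\chi(r)\,r\,dr\,d\theta\bigr)\cdot\bigl(\int_0^1 e^{i(x_1\gamma_1(t)+x_2\gamma_2(t)+tx_3)}\Phi(\gamma(t),x)\,dt\bigr)$;
the first factor is $\mathcal O((1+|x_1|+|x_2|)^{-2})$ by the two-dimensional computation, and the second is $\mathcal O(|x|^{-\infty})$ when $|x_1|+|x_2|\ll|x_3|$ (nonstationary phase) and $\mathcal O(|x|^{-1/3})$ otherwise (van der Corput, using that $\gamma',\gamma'',\gamma'''$ span $\RR^3$). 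The product is $\mathcal O(|x|^{-7/3})$ uniformly, with no matching argument needed.

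Separately, your claim that a small generic perturbation of $\Sigma$ yields nowhere-vanishing torsion is false: vanishing of torsion is one scalar equation along a one-dimensional curve, so a generic representative has isolated torsion zeros, and at such a zero the binormal direction defeats third-order van der Corput (one gets only $|x|^{-1/4}$ there, hence $|x|^{-9/4}$ overall, short of $7/3$). You need an explicit nondegenerate representative of the Poincar\'e dual class; the helix, available after the $\mathrm{SL}(3,\ZZ)$ normalization, supplies one.
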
 

\subsection{Wannier decay and regularities of Bloch frames} 
The decay rate of Wannier functions is closely related to the regularity of the corresponding  global Bloch frames of the Bloch bundles $\mathcal{E}$, from which the Wannier functions are constructed. We introduce 
\begin{defi}
\label{def:bf}
For the orthogonal projections $\mathcal P:=(P(k))_{k \in \mathbb R^d}$ satisfying Assumption \eqref{eq:t-equivariance}, we say that $\Phi:\mathbb R^d \to \mathcal H^r$ is a \emph{global Bloch frame}, if 
   \[\Phi : \RR^d \to \mathcal{H} \oplus \dots \oplus \mathcal{H} = \mathcal{H}^r,\quad k \mapsto (\varphi_1(k), \dots, \varphi_r(k)),\]
   is \emph{$\tau$-equivariant}:
    \begin{equation}
        \label{eq:equivariance}
        \varphi_a(k + \gamma) = \tau(\gamma) \varphi_a(k) \quad \text{for all } k \in \mathbb{R}^d, \, \gamma \in \Gamma^*, \, a \in \{1, \dots, r\},
    \end{equation}
and for a.e.\ $k \in \RR^d$, the set $\{\varphi_1(k), \dots, \varphi_r(k)\}$ is an orthonormal basis spanning $\operatorname{Ran} P(k)$. 
\end{defi}

\begin{rem}
\label{rem:section}
    In the language of vector bundles, each $\varphi_a(k)$ is a normalized section of the Bloch bundle, and a Bloch frame $\Phi$ is a family of orthonormal sections of the Bloch bundle $\mathcal{E}$ that spans the Bloch bundle over $\CC$ for a.e. $k\in \RR^d$. 
\end{rem}

\begin{defi}
We say that the global Bloch frame $\Phi = (\varphi_a)_{1\leq a\leq r}$ in Definition \ref{def:bf} is
\begin{enumerate}
    \item \emph{continuous} (respectively \emph{smooth, analytic}) if the maps $\varphi_a : \mathbb{R}^d \to \mathcal{H}$ are continuous (respectively smooth, analytic) for all $a \in \{1, \dots, r\}$;
    \item $H^s$-\emph{regular} if the maps $\varphi_a : \mathbb{R}^d \to \mathcal{H}$ lie in the corresponding local Sobolev space $H^s_{\loc}(\mathbb{R}^d; \mathcal{H})$ for all $i \in \{1, \dots, r\}$.
\end{enumerate}
\end{defi}
We now define Wannier functions for a given Bloch frame $\Phi$. First, recall the \textbf{Bloch transform} is given by 
 \begin{equation}
     \label{eq:BFT}
     (\mathcal B\psi)(k,x) :=\sum_{\gamma \in \Gamma} e^{-i\langle x+\gamma,k\rangle} \psi(x+\gamma), \quad \psi\in L^2(\mathbb R^d), 
 \end{equation}
which is an isometry 
 \[\mathcal B: L^2(\mathbb R^d) \longrightarrow \mathcal{H}_\tau := \{\varphi \in L^2_{\text{loc}}(\RR^d;L^2(\mathbb R^d/\Gamma )): \varphi(k+\gamma,x)=\tau(\gamma)\varphi(k,x) \text{ for all }\gamma \in \Gamma^*\}. \]
 
\begin{defi}
\label{def:Wannier}
Let $\varphi(k,x) \in \mathcal H_{\tau}$ be a normalized section of the Bloch bundle, the \emph{Wannier function} $w(\varphi) \in L^2(\mathbb R^d)$ is defined by
\begin{equation}
\label{eq:Wannier}
 w(\varphi)(x):=\frac{1}{\vert \mathbb R^d / \Gamma^* \vert} \int_{\mathbb R^d / \Gamma^*} e^{i\langle x, k\rangle}\varphi(k,x) \ dk=\mathcal B^{-1}(\varphi). 
\end{equation}
Moreover, Wannier functions of the Bloch frame $\Phi = (\varphi_a)_{1\leq a\leq r}$ are given by $\{w(\varphi_a)\}_{1\leq a\leq r}$.
\end{defi}

For a global Bloch frame $\Phi = (\varphi_a)_{1\leq a\leq r}$, the shifted Wannier functions
\begin{equation}
\label{eq:wb2}
\mathcal{W} = \{w(\varphi_a)(\bullet - \gamma)\}_{\gamma \in \Gamma, 1\leq a\leq r}   
\end{equation}
form an orthonormal basis of the range of the projection $\mathcal{P}$ with rank $r$. 
Hence, the set $\mathcal{W}$ is indeed a \emph{Wannier basis}. See \cite[Section~6.5]{ku2} or \cite[Proposition~5.5]{notes} for detailed discussions. 

For any Bloch bundle with base dimension $d\leq 3$, one can always find a global Bloch frame $\Phi = (\varphi_a)_{1\leq a\leq r}$ being $H^s$-regular for all $s<1$ (cf.~\cite[Theorem~2.4]{MPPT18}). Hence, under the inverse Bloch transform \eqref{eq:Wannier}, the corresponding Wannier functions $w(\varphi_a)$ satisfy the weighted $L^2$-estimate
\begin{equation}
    \label{eq:Hs-condition}
    \sum_{a=1}^r \Vert \langle \bullet \rangle^s w(\varphi_a)\Vert_{L^2(\mathbb R^d)}<\infty\text{ for all }s<1.
\end{equation}
Such $H^s$-regular Bloch frame (with $s<1$) can be improved to an $H^1$-regular or an analytic Bloch frame if and only if the Bloch bundle has vanishing Chern classes \cite{MPPT18}, which then yields the existence of exponentially decaying Wannier functions $\{w(\varphi_a)\}_{1\leq a\leq r}$.

In terms of pointwise decay rates, for non-trivial Bloch bundles, Thouless \cite{T84} conjectured that Wannier functions should be able to attain some optimal decay rate $\mathcal{O}(|x|^{-2})$ for $\vert x \vert \gg 1$ when $d=2$. His argument rests on the Bloch function representation \eqref{eq:Wannier}:
If $w(\varphi) = \mathcal O(\langle \bullet \rangle^{-d-\varepsilon})$ for $\varepsilon>0,$ then the series converges uniformly 
\[\vert \varphi(k,x)\vert \le \sum_{\gamma \in \Gamma} \vert w(\varphi)(x+\gamma)\vert \lesssim \sum_{\gamma \in \Gamma} \langle x+\gamma \rangle^{-d-\varepsilon} <\infty. \]
This shows that $k \mapsto \varphi(k,x) $ is a continuous normalized global section of the Bloch bundle. Hence the Bloch bundle is trivial. This leads Thouless to conjecture that the pointwise decay rate $\mathcal O(|x|^{-2})$ may be achieved by some Wannier functions for $d=2$ (also see \cite[Theorem~5.4]{ku}). 

It was pointed out in \cite{MPPT18} that the weighted $L^2$-condition \eqref{eq:Hs-condition} heuristically corresponds to the pointwise decay rate $\mathcal O(|x|^{-2})$ for
$d = 2$ and $\mathcal O(|x|^{-5/2})$ for $d = 3$. However, the existence of Wannier functions saturating such decay rate remains open. 

On the other hand, it is also worthwhile to point out that the pointwise decay rate $w(\varphi) = \mathcal{O}(|x|^{-2})$ is stronger than the weighted $L^2$-condition \eqref{eq:Hs-condition} for $d=2$, in the sense that it satisfies a better weighted $L^2$-estimate
\[\Vert \langle \bullet \rangle^1 (\log \langle \bullet \rangle)^{-t} w(\varphi)\Vert_{L^2(\mathbb R^d)}<\infty\text{ for all } t>1/2.\]
For $d=2$, \cite{LDLK24} construct a distinguished section of the Bloch bundle that matches the decay $\mathcal{O}(|x|^{-2})$. Our Theorem \ref{theo:thouless} gives a different construction and refines \cite{LDLK24} by showing that all Wannier functions may decay exponentially except one that exhibits the Thouless decay rate with a complete asymptotic expansion. 

Interestingly, for $d=3$, Thouless' simple continuity argument fails to capture a sharp uniform decay rate. Our construction can be carried over to Bloch bundles over $\TT^3$ (see Section \ref{sec:t3}) to construct Bloch frames in $H^s$ for all $s<1$ with the corresponding Wannier function decay rate $\mathcal{O}((1+|x_1|+|x_2|)^{-2}\langle x_3\rangle^{-\infty})$ as $x\to\infty$ (see Corollary \ref{cor:Becker}). 
In addition, the uniform decay rate we obtained in Theorem \ref{theo:becker} is better than the decay rate for Wannier functions on $\TT^2$ in all directions.

It is reasonable to ask whether a better uniform decay rate in all directions is possible. Since the function $\langle x \rangle^{-\frac{5}{2}}$ satisfies the weighted $L^2$-bound $\Vert \langle \bullet \rangle w \Vert_{L^2(\mathbb R^3)}=\infty$ and $\Vert \langle \bullet \rangle^s w \Vert_{L^2(\mathbb R^3)}<\infty$ for $s<1$, we propose
\begin{qu}
For a non-trivial Bloch bundle over $\mathbb T^3$, what is the optimal pointwise decay rate of the Wannier functions as $|x| \to\infty$? Can one construct a Wannier function $w(\varphi)$ such that $w(\varphi)(x)=\mathcal{O}(|x|^{-\frac{5}{2}})$ as $\vert x\vert\to\infty$? 
\end{qu}

Moreover, we propose the following more general question for vector bundles over any smooth manifold.
\begin{qu}
    Given a smooth vector bundle $\mathcal{E}$ of rank $r$ over a smooth manifold $M$. What is the largest $\alpha\in \RR$ such that there exists a (measurable) normalized frame $s_i$, $i=1,2,\cdots,r$ with respect to a smooth metric, such that in any local trivialization over an open subset $U\subset M$ and for any smooth cutoff $\chi\in C_c^{\infty}(U)$, we have
    \begin{equation*}
        \mathcal{F}(\chi s_i)(\xi) = \mathcal{O}(|\xi|^{-\alpha})
    \end{equation*}
    as $|\xi|\to \infty$?
\end{qu}
The only general thing we can say is that $\alpha$ can be arbitrarily large when $\mathcal{E}$ is trivial, and $\alpha\leq d:=\dim M$ when $\mathcal{E}$ is nontrivial. As we see above, there could be finer restrictions on $\alpha$, and it is related to the sharp Fourier decay under certain topological constraints.

\smallsection{Other related works}
The classical question on the existence of exponentially localized Wannier functions can be fully understood in terms of Chern classes, including the works by Nenciu \cite{ne}, Marzari--Vanderbilt \cite{MV}, Panati \cite{P07}, Brouder--Panati--Calandra--Mourougane--Marzari \cite{br}, Marzari--Mostofi--Yates--Souza--Vanderbilt \cite{M+12} and Monaco--Panati--Pisante--Teufel \cite{MPPT18}.  
For periodic systems with non-trivial Chern classes, even though it is not possible to construct orthonormal Bloch frames that lead to exponentially decaying Wannier basis, it has been shown that one can construct so-called Parseval frames with redundancies by embedding it into trivial Bloch bundles with higher ranks (see the works by Kuchment \cite{ku,ku2}, Auckly--Kuchment \cite{ak} and Cornean--Monaco--Moscolari \cite{cmm}). This yields a set of exponentially localized Wannier functions that are \emph{linearly dependent}. Recently, results on Wannier function localization have been extended to non-periodic systems by Lu--Stubbs--Watson \cite{LSW}, Marcelli--Moscolari--Panati \cite{MMP}, Lu--Stubbs \cite{LS24} and Rossi--Panati \cite{RP24}. 

We would also like to point out that, for dimension two, our construction of normalized sections of Bloch bundles is similar to \cite[Theorem~5.1]{MPPT18} but from a different perspective, while the full asymptotics of Wannier functions is new as far as we know. For dimension three, the construction of the normalized section in Proposition \ref{prop:vanish-t3} and the resulting Wannier decay $\mathcal{O}(|x|^{-7/3})$ are new to the best of our knowledge. 
 
\smallsection{Notations and conventions} Let $\mathcal H$ be a separable Hilbert space. Let $\Gamma = \sum_{i=1}^d \ZZ e_i \subset \mathbb R^d,$ for linearly independent $e_i,$ be a lattice, and $\Gamma^*:=\{ x\in \mathbb R^d : \langle x,\gamma\rangle \in 2\pi \mathbb Z \text{ for all } \gamma \in \Gamma\}=\sum_{i=1}^d \ZZ v_i$ be the dual lattice. We denote by $U(\mathcal H)$ the group of unitary operators on $\mathcal H$ and by $L(\mathcal H)$ the space of bounded linear operators on $\mathcal H$. 

The lattices $\Gamma$ and $\Gamma^*$ naturally give rise to a torus as well as a dual torus, that is, $\mathbb R^d/\Gamma$ and $\mathbb R^d/\Gamma^*$. However, for simplicity, we will often just say that we have a vector bundle over a torus $\mathbb T^d$, which then in the physics setting corresponds to the dual torus $\mathbb R^d /\Gamma^*$.

\smallsection{Structure of the paper}
In Section \ref{sec:vb}, we recall the classification of complex vector bundles using Chern classes.  
In Section \ref{sec:t2}, we compute the full asymptotics and the optimal decay rate of Wannier functions on $\TT^2$ for a given non-trivial Chern class and prove Theorem \ref{theo:thouless}. In Section \ref{sec:t3}, we discuss the decay rate of Wannier functions on $\TT^3$ for non-trivial Chern classes and prove Theorem \ref{theo:becker}. 

\smallsection{Acknowledgment} We would like to thank Maciej Zworski for many helpful discussions and suggestions on the manuscript, and Ruixiang Zhang for helpful discussions on the Fourier decay of singular sections. ZT and MY were partially supported by the Simons Targeted Grant Award No.~896630. MY acknowledges the support of AMS-Simons travel grant. SB acknowledges support by SNF Grant PZ00P2 216019.

\section{Complex vector bundles and Wannier functions}
\label{sec:vb}

In this section, we first discuss the classification of complex vector bundles $\mathcal{E}$ over a manifold of dimension $d\leq 3$ and recall the basic properties of Chern classes.

\subsection{Classification of complex vector bundles} 
We first briefly recall the following
\begin{defi}
\label{def:vb}
    Let $E, M$ be topological spaces. $\pi: E\to M$ is called a complex \emph{vector bundle} of rank $r$ if for any $x\in M$, $\pi^{-1}(x)$ is a complex vector space of dimension $r$, and there exists a covering $\{U_i\}$ of $M$ such that there is a homeomorphism, called the trivialization, which is linear on each fiber $\pi^{-1}(x)$, such that the following diagram commutes 
    \[
    \begin{tikzcd}
    \pi^{-1}(U_i) \arrow[r, "\cong"] \arrow[d, "\pi"] & U_i \times \mathbb{C}^r \arrow[ld, "pr_1"] \\
    U_i &
    \end{tikzcd}
    \]
    Here $E$ is also called the \emph{total space} and $M$ is called the \emph{base}. A vector bundle of rank $1$ is called a line bundle.
\end{defi}
\begin{rem}\label{rem:analytic}
     In the above definition, if $E,M$ are $C^s$ manifolds, $\pi$ is $C^s$ and the trivialization is a $C^s$ map, then the vector bundle is called a $C^s$ vector bundle, where $s\in \NN$ or $s=\infty$ (smooth) or $s=\omega$ (real analytic). We note that the classifications of $C^s$ vector bundles are equivalent for $s\in\NN$ or $s=\infty$ or $s=\omega$, see \cite[Theorem 5]{S64}.

\end{rem}

Now we recall the classification of complex vector bundles following \cite{bty}. For a complex vector bundle $\mathcal{E}$ over a manifold $M$, the Chern class $c_k(\mathcal{E})$ is an element of $H^{2k}(M;\ZZ)$.
If the vector bundle has a smooth connection with curvature $\Omega$ which is a $\mathfrak{gl}(r,\CC)$-valued $2$-form, the Chern classes can be computed by
\begin{equation*}
    \det\left(I+\frac{\sqrt{-1}}{2\pi}t[\Omega]\right) =\sum_{j\geq 0} c_j(\mathcal{E}) t^j.
\end{equation*}
Over any manifold $M$, complex rank $r$ line bundles are classified up to isomorphisms by the first Chern class $c_1 \in H^2(M;\ZZ)$ (see \cite[p.~34]{C79}). For higher-rank complex vector bundles, we recall the following classification result:
\begin{prop}
\label{prop:chern}
    Complex vector bundles $\mathcal{E}$ over a manifold $M$ of dimension $\leq 3$ are classified by the first Chern class $c_1 \in H^2(M;\ZZ)$. Furthermore, $\mathcal{E}$ admits a decomposition into line bundles 
    \begin{equation}\label{eq:complex-bundle-decomp}
        \mathcal{E} = \bigg(\bigoplus_{i=1}^{r-1} L_i\bigg) \oplus L,
    \end{equation}
    where all line bundles $L_i$ are trivial apart from possibly $L$ for which $c_1(\mathcal{E})=c_1(L).$
\end{prop}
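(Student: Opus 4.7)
The plan is to prove the decomposition \eqref{eq:complex-bundle-decomp} first by iteratively splitting off trivial line bundles, and then to reduce the classification to the known classification of complex line bundles by $H^2(M;\ZZ)$ that was already cited from \cite[p.~34]{C79}.

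The crucial step is to show that every complex vector bundle $\mathcal F$ over $M$ of rank $s\geq 2$ admits a nowhere-vanishing global section. This is a standard application of obstruction theory on a CW decomposition of $M$: the primary obstruction to extending a nonvanishing section from the $(2s-1)$-skeleton across the $2s$-cells is the Euler class, which for a complex bundle coincides with the top Chern class $c_s(\mathcal F)\in H^{2s}(M;\ZZ)$, and the higher obstructions lie in $H^{k+1}(M;\pi_k(S^{2s-1}))$ for $k\geq 2s$. Since $\dim M\leq 3<2s$, all of these cohomology groups vanish on purely dimensional grounds. (For smooth $M$ one may instead argue directly via transversality: a smooth section of $\mathcal F$ transverse to the zero section has zero set of real codimension $2s>\dim M$, hence empty.) Given such a nowhere-vanishing section $\sigma$, its $\CC$-span $\CC\cdot\sigma$ is a trivial line subbundle of $\mathcal F$; fixing a Hermitian metric and passing to the orthogonal complement yields a smooth splitting $\mathcal F\cong \mathcal F'\oplus \CC$ with $\rank(\mathcal F')=s-1$.

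Applying this splitting iteratively to $\mathcal E$ and stopping once the rank has dropped to one produces a decomposition $\mathcal E\cong L\oplus \CC^{r-1}$ for some line bundle $L$, which is the form \eqref{eq:complex-bundle-decomp} with each $L_i$ trivial. The Whitney sum formula gives $c_1(\mathcal E)=c_1(L)+(r-1)c_1(\CC)=c_1(L)$, so $L$ carries the full first Chern class of $\mathcal E$. For the classification statement, if $\mathcal E_1,\mathcal E_2$ are two rank-$r$ bundles on $M$ with $c_1(\mathcal E_1)=c_1(\mathcal E_2)$, the decomposition above produces line bundles $L_1,L_2$ with $c_1(L_1)=c_1(L_2)$ and $\mathcal E_i\cong L_i\oplus \CC^{r-1}$; by the line-bundle classification, $L_1\cong L_2$, whence $\mathcal E_1\cong \mathcal E_2$. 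The only real obstacle is verifying that the dimensional hypothesis $\dim M\leq 3$ kills every obstruction to the successive splittings; this is exactly the content of the inequality $\dim M<2s$ used above, and no further subtleties arise in dimension $\leq 3$ since the potentially awkward injectivity question for $[M,BU(r)]\to [M,BU(r+1)]$ is bypassed once the explicit splitting is in hand.
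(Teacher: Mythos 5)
Your proposal is correct and follows essentially the same route as the paper: produce a nowhere-vanishing section (the paper uses exactly the transversality/dimension-count argument you mention parenthetically), split off the trivial line subbundle it spans, and induct down to a single line bundle $L$ carrying $c_1(\mathcal{E})$. Your obstruction-theoretic justification of the key lemma and your explicit verification of the classification statement via the line-bundle case are just slightly more detailed renderings of the same argument.
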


\begin{proof}
Let $\mathcal{E}$ be a complex vector bundle of rank $r \geq 2$ over $M$. By the Thom transversality theorem, there exists a non-vanishing section $s$ of $\mathcal{E}$. If $s$ intersects the zero section $0_M$ transversally at $(x,0)\in\mathcal{E}$, we have 
\[ ds (T_xM) + T_{(x,0)}0_M = T_{(x,0)}\mathcal{E}.\]
Thus, the dimension of the intersection of the tangent space
\[
\dim_{\RR}(ds (T_xM) \cap T_{(x,0)}0_M) = \dim_{\RR}(ds (T_xM)) + \dim_{\RR}(T_{(x,0)}0_M) - \dim_{\RR}( T_{(x,0)}\mathcal{E}) < 0,
\]
ensures that the intersection is empty. Hence, $s$ induces a trivial line bundle $L_1$, and $\mathcal{E}$ splits as a Whitney sum:
\[
\mathcal{E} \cong L_1 \oplus \mathcal{E}',
\]
where $\mathcal{E}'$ is a complex vector bundle of rank $r-1$. By induction, $\mathcal{E}$ is isomorphic to the Whitney sum of $r-1$ trivial line bundles with a complex line bundle $L$. The Chern number of $L$ satisfies $c_1(L) = c_1(\mathcal{E})$, which completes the proof.
\end{proof} 
\begin{rem}
    When $\mathcal{E}$ is real analytic, the decomposition in the previous Proposition can be chosen so that $L_i$ and $L$ are real analytic line bundles. This is because \eqref{eq:complex-bundle-decomp} holds in the topological category, and Remark~\ref{rem:analytic} implies a real analytic isomorphism
    \begin{equation*}
         \mathcal{E} \cong \bigg(\bigoplus_{i=1}^{r-1} L_i\bigg) \oplus L.
    \end{equation*}
\end{rem}

\subsection{Chern numbers and localization dichotomy}
\label{sec:Chern_review}
Let $\Omega(k) = \sum_{i,j} \Omega_{ij}(k) \ dk_i \wedge dk_j$ be the curvature form of the Berry connection:
\[ \Omega_{ij}(k):=\tr_{\mathcal H}(P(k)[\partial_i P(k),\partial_j P(k)]).\]
Here, $P(k)$ denotes the family of orthogonal projections parametrized by $k \in \mathbb{R}^d$ that satisfy the equivariance conditions \eqref{eq:t-equivariance} of the reciprocal lattice $\Gamma^*$. Then, the first Chern class is defined as
\[ c_1(\mathcal E)=\tfrac{i}{2\pi} [\Omega] \in H^2_{\text{dR}}(\mathbb R^d/\Gamma^*;\RR),\]
where $[\Omega]$ represents the de Rham cohomology class of the curvature form $\Omega$.

To compute the first Chern class explicitly, we use the de Rham isomorphism. For the $d$-dimensional torus $\mathbb{R}^d / \Gamma^*$, the second homology group satisfies
\[
H_2(\mathbb R^d/\Gamma^*) \cong \mathbb Z^{\binom{d}{2}}.
\]
This group is generated by the independent $2$-cycles $(B_{ij})_{1\le i < j \le d},$ where $B_{ij}:=(\mathbb R v_i + \mathbb R v_j)/(\ZZ v_i + \ZZ v_j)$ and $(v_i)_{i=1}^d$ are basis vectors of $\Gamma^*$.

\begin{defi}
    A family of projections $\mathcal P$ that satisfy Assumption~\eqref{eq:t-equivariance} is called \emph{Chern trivial} if for $d \in \{2,3\}$ the Chern numbers
    \[ C_1(\mathcal E)_{ij}:=\frac{i}{2\pi} \int_{B_{ij}} \tr_{\mathcal H}(P(k)[\partial_i P(k),\partial_j P(k)]) \ dk_i \wedge dk_j\]
    vanish for all $1\le i<j\le d$. 
\end{defi}

By the de Rham isomorphism, Chern triviality implies the vanishing of the first Chern class $c_1(\mathcal E)=0.$ The first Chern class has direct implications on the structure of Wannier functions. This is the \emph{localization dichotomy} as stated, for instance, in \cite{MPPT18}:
\begin{itemize}
    \item \textbf{Case 1:} If $\mathcal P$ is Chern-trivial, then there exists a Bloch frame $\Phi$ that is real analytic in the parameter $k$, such that all Wannier functions $w(\varphi_i)$ decay exponentially.
    \item \textbf{Case 2:} If $\mathcal P$ is not Chern-trivial, then there does not exist Bloch frame that is $H^1_{\loc}$ in $k$. In addition, for any Bloch frame $\Phi$, there exist Wannier functions $w(\varphi_i)$ for some $i$ such that $ x w(\varphi_i) \notin {L^2} $.
\end{itemize}

\section{Asymptotics of Wannier functions on $\TT^2$}
\label{sec:t2}
In this section, we compute the asymptotics of Wannier functions when the Chern number is nonzero for a complex line bundle $\mathcal{E}$ over $\TT^2$. By the (proof of) Proposition \ref{prop:chern} that a general complex vector bundle can be decomposed into the Whitney sum of $r-1$ trivial line bundles $L_i$ and a complex line bundle $L$ such that $c_1(L)=c_1(\mathcal{E})$. 
Trivial bundles $L_i$ always give rise to exponentially localized Wannier functions. Thus, the optimal Wannier decay of a general complex vector bundle over $\TT^2$ or $\TT^3$ is reduced to the optimal Wannier decay rate of a single complex line bundle, which we will investigate next. 

The asymptotics of Wannier functions is related to possible singularities of the normalized section. We first prove the following proposition.

\begin{prop}
\label{prop:vanish}
Let \(\mathcal{E}\) be a smooth complex line bundle over \(\mathbb{T}^2\). If the Chern number \(c_1(\mathcal{E}) = m \in \mathbb{Z}\), then there exists a smooth section \(s: \mathbb{T}^2 \to \mathcal{E}\) that vanishes at a single point and is of the form $z^{m}$ in local coordinates.
\end{prop}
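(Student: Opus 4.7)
The plan is to construct an explicit smooth model line bundle $\mathcal{L}_m$ over $\mathbb{T}^2$ carrying a canonical section of the required shape, then invoke the classification of complex line bundles by the first Chern class to identify $\mathcal{E}$ with $\mathcal{L}_m$, and finally transport the model section to $\mathcal{E}$.

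First I would fix a base point $p_0\in\mathbb{T}^2$, a coordinate disk $U\ni p_0$ with complex coordinate $z$ centered at $p_0$, and set $V=\mathbb{T}^2\setminus\{p_0\}$, so that $U\cap V$ is the punctured disk. Assuming $m\geq 0$ (the opposite sign is handled symmetrically by replacing $z$ with $\bar z$), I would define $\mathcal{L}_m$ by gluing $U\times\CC$ and $V\times\CC$ along the smooth, nowhere-vanishing cocycle $g_{UV}(z)=z^m$ on $U\cap V$. A canonical global section $s_m$ is then given in these local trivializations by $s_m|_U(z)=z^m$ and $s_m|_V\equiv 1$; the compatibility $s_m|_U = g_{UV}\,s_m|_V$ is built in, so $s_m$ is a smooth section vanishing only at $p_0$ and literally equal to $z^m$ in the $U$-chart.

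The next step is to verify that $c_1(\mathcal{L}_m)=m$. The cleanest route is to interpret the first Chern number as the winding number of the transition cocycle $g_{UV}$ along a small circle around $p_0$, which is exactly $m$; equivalently, one can equip $\mathcal{L}_m$ with a Hermitian metric (flat on $V$, smoothly interpolated on $U$ via a partition of unity), form the associated Chern connection, and integrate the resulting curvature two-form over $\mathbb{T}^2$, recovering the same count by Stokes' theorem applied to the torus with a small disk removed around $p_0$.

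To finish, specializing Proposition~\ref{prop:chern} to the rank-one case (equivalently, line bundles over $\mathbb{T}^2$ are classified up to isomorphism by $c_1\in H^2(\mathbb{T}^2;\mathbb{Z})$), the bundles $\mathcal{E}$ and $\mathcal{L}_m$ are smoothly isomorphic. Any smooth bundle isomorphism $\Phi:\mathcal{L}_m\to \mathcal{E}$ pushes $s_m$ forward to a section $s:=\Phi\circ s_m$ of $\mathcal{E}$ that vanishes only at $p_0$; pulling back the $\mathcal{L}_m$-trivialization on $U$ via $\Phi$ gives a trivialization of $\mathcal{E}$ on $U$ in which $s$ reads exactly $z^m$. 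The main obstacle will be the Chern-number verification in the third step (making sure the winding/curvature count agrees with the convention used elsewhere in the paper, including signs); everything else is a formal consequence of the classification and of transporting structure through $\Phi$.
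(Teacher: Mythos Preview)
Your argument is correct, but it follows a genuinely different route from the paper's proof. The paper works \emph{intrinsically} on $\mathcal{E}$: it fixes a disc $D$ around a point $p$, writes down $s_{\text{local}}(z)=z^m$ in a local trivialization on $D$, and then exploits the fact that $\mathbb{T}^2\setminus D'$ is homotopy equivalent to $\mathbb{S}^1\vee\mathbb{S}^1$ (so $\mathcal{E}$ is trivial there) to produce a nonvanishing section $s_{\text{outer}}$ on the complement. The key step is that the winding number of $s_{\text{outer}}$ along $\partial D'$ equals the Chern number $m$, hence matches that of $s_{\text{local}}$, so the two can be glued by a homotopy on the annulus $D'\setminus D$ and then smoothed. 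Your approach instead builds an \emph{external} model $\mathcal{L}_m$ with the desired section already written in, and then invokes the classification of line bundles by $c_1$ to transport everything to $\mathcal{E}$ via a bundle isomorphism.

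What each buys: your route is cleaner and more conceptual once the classification is granted, and the section comes for free with no gluing or homotopy to arrange. The paper's route is more self-contained in that it only needs the weaker input that $\mathcal{E}$ is trivial over the complement of a disc (rather than the full classification), and its hands-on gluing viewpoint is reused almost verbatim in the three-dimensional construction (Proposition~\ref{prop:vanish-t3}), where one again patches a local section near a curve with an outer section and checks nonvanishing on the gluing region. Your remark that $m<0$ is handled by replacing $z$ with $\bar z$ is the right fix; the paper's phrasing ``$z^m$'' tacitly assumes this symmetry as well.
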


\begin{proof}

Pick a point \(p \in \mathbb{T}^2\). Let \(D \subset \mathbb{T}^2\) be a small disc around \(p\) where \(\mathcal{E}\) is trivialized, so we can identify the bundle locally with \(D \times \mathbb{C}\). In this trivialization, define a section \(s_{\text{local}}\) by
$s_{\text{local}}(z) = z^m$, 
where \(z\) is a complex coordinate on \(D\) with \(z = 0\) corresponding to \(p\). This section has a zero of order \(m\) at \(p\).

Now extend the domain of consideration to a slightly larger disc \(D' \supset D\) with boundary \(\partial D'\). On the complement \(\mathbb{T}^2 \setminus D'\), the line bundle \(\mathcal{E}\) is trivial because \(\mathbb{T}^2 \setminus D'\) is homotopically equivalent to \(\mathbb{S}^1 \vee \mathbb S^1\), and every complex line bundle over \(\mathbb{S}^1 \vee \mathbb S^1\) is trivial as $H^2(\mathbb{S}^1 \vee \mathbb S^1,\ZZ)=0$. Therefore, there exists a smooth, non-vanishing section \(s_{\text{outer}}\) defined on \(\mathbb{T}^2 \setminus D'\) (see also \cite[Proposition 11.14]{BT}). Note that in particular the winding number of the section $s_{\text{outer}}$ on \(\partial D'\) equals the Chern number of $\mathcal{E}$ by \cite[Theorem 11.16]{BT}, as $s_{\text{outer}}$ can be viewed as a section of a $S^1$-bundle since it is non-vanishing.


Now we have a section $s_{\text{local}}$ on $\overline{D}$ and a nonvanishing section $s_{\text{outer}}$ on $\mathbb{T}^2\setminus D'$ such that their winding number on the boundary agrees. We want to glue them smoothly on $D'\setminus D$. Since they have the same winding number, there is a homotopy $H(t,\theta):[0,1]\times \mathbb{S}^1\to \CC\setminus \{0\}$ such that $H(0,\theta)=s|_{\partial D}$ and $H(1,\theta)=s'|_{\partial D'}$.
Therefore, we can define a continuous gluing by
\begin{equation*}
    \widetilde{s}(r,\theta)= H\left(\frac{r-r_D}{r_{D'}-r_D},\theta\right)\in\CC\backslash\{0\},
\end{equation*}
where \(r_D, r_{D'}\) are the radii of \(D\) and \(D'\), respectively. 
By convolving with a smooth approximation of identity near the gluing region, we can make it smooth and still nonvanishing. 


By construction, the section \(s\) vanishes only at \(p\), where it coincides with \(s_{\text{local}}(z) = z^m\). Away from \(p\), \(s\) is smooth and non-vanishing, completing the construction.
\end{proof}

Now we compute the asymptotics of Wannier functions using the section constructed in Proposition \ref{prop:vanish}.

\begin{proof}[Proof of Theorem \ref{theo:thouless}] 
We first normalize the section. The normalized section is smooth everywhere, aside from the singularity at the origin given by $\frac{k^m}{|k|^m}\Phi(k,x)$ on a disc $D$, where $\Phi(k,x)$ is a local normalized smooth section. Here we think of $k$ as a complex variable $k=k_1+ik_2$.
Let $\chi\in C_c^{\infty}([0,\infty))$ be a smooth cutoff function such that $\chi\equiv 1$ near zero. We now use \eqref{eq:Wannier} to recover the Wannier function from the Bloch function and integrate over $\mathbb R^2$, due to the presence of the cutoff function. Modulo an $\mathcal O(|x|^{-\infty})$-error, which is given by taking the Bloch transform of the smooth part corresponding to the cutoff function $1-\chi$ away from the singularity, the Wannier function is given by
\begin{equation}
\label{eq:wannier2}
\begin{split}
    w(\varphi)(x)& =\frac{1}{|\RR^2/\Gamma^*|}
    \int_{\RR^2}e^{i\langle k,x\rangle}\frac{k^m} {|k|^m}\chi(|k|)\Phi(k,x)\, dk + \mathcal O(| x |^{-\infty})\\
    & = \frac{1}{|\RR^2/\Gamma^*|}\int_{0}^{2\pi}\int_{0}^{\infty} e^{im\theta}e^{i\rho (x_1\cos\theta+x_2\sin\theta)}\chi(\rho) \Phi(k,x)\rho \,d\rho \, d\theta + \mathcal O ( | x |^{-\infty}) \\
    & = \frac{\Phi(0,x)}{|\RR^2/\Gamma^*|}\int_{0}^{2\pi}\int_{0}^{\infty}  e^{im\theta}e^{i\rho (x_1\cos\theta+x_2\sin\theta)}\chi(\rho) \rho \,d\rho \, d\theta \\
    &\quad +\frac{1}{|\RR^2/\Gamma^*|}   \int_{\RR^2}e^{i\langle k,x\rangle}\frac{k^m} {|k|^m}\chi(|k|)( \Phi(k,x)-\Phi(0,x)) \,dk + \mathcal O(|x|^{-\infty}).
\end{split}
\end{equation} 
The first term in the right-hand side of \eqref{eq:wannier2} is given by
\begin{equation*}
    \begin{aligned}
    &\frac{\Phi(0,x)}{|\RR^2/\Gamma^*|}\int_{0}^{2\pi}\int_{0}^{\infty}  e^{im\theta}e^{i\rho (x_1\cos\theta+x_2\sin\theta)}\chi(\rho) \rho \,d\rho \, d\theta\\
     =&\frac{\Phi(0,x)}{|\RR^2/\Gamma^*|}\int_{0}^{2\pi}\int_{0}^{\infty} e^{im\theta}e^{i\rho |x|\sin(\theta+\theta_x)}\chi(\rho)\rho \, d\rho \, d\theta    \\
    =&\frac{\Phi(0,x)e^{-im\theta_x}}{|\RR^2/\Gamma^*|}\int_{0}^{2\pi}\int_{0}^{\infty}  e^{im\theta}e^{i\rho |x|\sin\theta}\chi(\rho)\rho \, d\rho \, d\theta 
    \end{aligned}
\end{equation*}
where $\sin\theta_x=x_1/|x|$ and $\cos\theta_x=x_2/|x|$.

We can compute the asymptotics as follows.
\begin{equation}\label{eq:computation}
    \begin{split}
    &\int_{0}^{2\pi}\int_{0}^{\infty}  e^{im\theta}e^{i\rho |x|\sin\theta}\chi(\rho)\rho \, d\rho \, d\theta =|x|^{-2}\int_{0}^{2\pi}\int_{0}^{\infty}  e^{im\theta}e^{i\rho\sin\theta}\chi(\tfrac{\rho}{|x|})\rho \, d\rho \, d\theta\\
    =& |x|^{-2}\int_{0}^{2\pi}\int_{0}^{\infty}  e^{im\theta}e^{i\rho\sin\theta}\rho \, d\rho \, d\theta -|x|^{-2}\int_{0}^{2\pi}\int_{0}^{\infty}  e^{im\theta}e^{i\rho\sin\theta}(1-\chi(\tfrac{\rho}{|x|}))\rho \, d\rho \, d\theta\\
    =&|x|^{-2}\int_{0}^{2\pi}\int_{0}^{\infty}  e^{im\theta}e^{i\rho\sin\theta}\rho \, d\rho \, d\theta +\mathcal{O}(|x|^{-\infty}).
    \end{split}
\end{equation}
The second term is $\mathcal{O}(|x|^{-\infty})$ by integration by parts in $\rho$ and $\theta$. Here the integral does not converge in the usual sense of Riemann integral. They are defined in the distributional sense:
\begin{equation*}
    \int_0^{\infty}e^{i\rho \sin\theta}\rho d\rho :=\lim\limits_{\epsilon\to 0+}    \int_0^{\infty}e^{i\rho (\sin\theta+i\epsilon)}\rho d\rho = - \frac{1}{(\sin\theta+i0)^2}.
\end{equation*}
By integration by parts,
\begin{equation*}
    -\int_0^{2\pi} e^{im\theta}(\sin\theta+i0)^{-2}d\theta =-im\int_0^{2\pi}e^{im\theta}\frac{\cos\theta}{\sin\theta+i0}d\theta.
\end{equation*}
Since $\tfrac{\cos\theta}{\sin\theta+i0}=p.v. \cot\theta-i\pi\delta_0(\theta)+i\pi\delta_\pi(\theta)$, we have
\begin{equation*}
    \begin{split}
    &\int_0^{2\pi}e^{im\theta}\frac{\cos\theta}{\sin\theta+i0}d\theta =\int_0^{2\pi}e^{im\theta} p.v. \cot\theta d\theta+i\pi((-1)^m-1) \\
=&  i\int_0^{2\pi}\frac{\sin m\theta \cos\theta}{\sin\theta} d\theta+i\pi((-1)^m-1) =\pi i((-1)^m+1)+\pi i((-1)^m-1) =2\pi i(-1)^m.
    \end{split}
\end{equation*}
Therefore
\begin{equation*}
    \int_{0}^{2\pi}\int_{0}^{\infty}  e^{im\theta}e^{i\rho\sin\theta}\rho \, d\rho \, d\theta = 2\pi m(-1)^m.
\end{equation*}
Another way to compute the integral is to use the Bessel function
\begin{equation*}
    J_{m}(\rho)=\frac{1}{2\pi} \int_0^{2\pi} e^{-im\theta +i\rho \sin\theta } \ d\theta
\end{equation*}
so that
\begin{equation*}
    \int_{0}^{2\pi}\int_{0}^{\infty}  e^{im\theta}e^{i\rho\sin\theta}\rho \, d\rho \, d\theta = 2\pi (-1)^m\int_0^{\infty} J_m(\rho) \rho d\rho.
\end{equation*}

The integral of Bessel function can be computed by differentiating the following Fourier integral for $\xi \in [0,1)$  (see ~\cite[\href{http://dlmf.nist.gov/10.22.E59}{(10.22.59)}]{DLMF})
\[ \int_0^{\infty} J_m(x)e^{i x \xi} \ dx = \frac{e^{im \arcsin(\xi)}}{(1-\xi^2)^{1/2}}.\] 

In conclusion, the first term in the right-hand side of \eqref{eq:wannier2} is given by
\[\frac{\Phi(0,x)}{|\RR^2/\Gamma^*|}\int_{0}^{2\pi}\int_{0}^{\infty}  e^{im\theta}e^{i\rho (x_1\cos\theta+x_2\sin\theta)}\chi(\rho) \rho \,d\rho \ d\theta = \frac{2\pi (-1)^m me^{-im \theta_x}}{|\RR^2/\Gamma^*|\vert x \vert^2}\Phi(0,x) + \mathcal O(\vert x \vert^{-\infty}).\] 

For the second term in the right-hand side of \eqref{eq:wannier2}, we recall 
\[\Phi(k,x)=\Phi(0,x)+\sum\limits_{0<|\alpha|\leq N_0}\frac{k^{\alpha}}{\alpha!}\partial_k^{\alpha}\Phi(0,x)+\sum\limits_{|\beta|=N_0+1}k^{\beta}R_{\beta}(k,x)\]
where $R_{\beta}(k,x)$ is smooth and $k^{\alpha} := k_1^{\alpha_1} k_2^{\alpha_2}$ for $k=k_1+ik_2$. 
By a similar estimate as above, we have
\begin{equation*}
     \int_{\RR^2}e^{i\langle k,x\rangle}\frac{k^m} {|k|^m}k^{\alpha}\chi(|k|) dk =c_{m,\alpha}(\theta_x)|x|^{-2-|\alpha|}+\mathcal{O}(|x|^{-\infty}).
\end{equation*}
Since $\frac{k^m}{|k|^m}k^{\beta}\chi(|k|)\in C^{N_0}$ for $|\beta|=N_0+1$, we have
\begin{equation*}
    \sum\limits_{|\beta|=N_0+1}\int_{\RR^2}e^{i\langle k,x\rangle}\frac{k^m} {|k|^m}k^{\beta}R_{\beta}(k,x)\chi(|k|) dk =\mathcal{O}(|x|^{-N_0}).
\end{equation*}
We conclude from \eqref{eq:wannier2} that
\begin{equation*}
     w(\varphi)(x) \sim \frac{1}{|\RR^2/\Gamma^*|}\sum\limits_{\alpha\in\NN^2} c_{m,\alpha}(\theta_x)\vert x \vert^{-2-|\alpha|}\partial_{k}^{\alpha}\Phi(0,x)
\end{equation*}
with $c_{m,0}(\theta_x)=2\pi (-1)^m me^{-im \theta_x}$.
\end{proof}

\section{Decay of Wannier function on $\TT^3$} 
\label{sec:t3}
In this section, we consider decay of Wannier functions of non-trivial Bloch bundles over $\TT^3$. For simplicity, we assume $\Gamma^*=\ZZ^3$ in this section. The first Chern class for complex vector bundles over $\mathbb T^3$ is given by 
\[ c_1 = m_1 [dx_1 \wedge dx_2] + m_2 [dx_2 \wedge dx_3] + m_3 [dx_3 \wedge dx_1],\quad m_1,m_2,m_3\in\ZZ.\]
To further simplify the presentation, we want to find a change of coordinates on $\mathbb T^3$ to simply the form of $c_1$, which corresponds to a transformation $y=Bx$ with a unimodular matrix $B \in \operatorname{SL}(3,\mathbb Z)$.
We have the following Lemma: 
\begin{lemm}
\label{lem:Oltman-Stubbs}
    There exist $B \in \operatorname{SL}(3,\mathbb Z)$ and new coordinates $y:=Bx$ with $x \in \mathbb R^3 / \mathbb Z^3$ such that the first Chern class is given by \[ c_1=m [dy_1 \wedge dy_2] \text{ where }m := \operatorname{gcd}(m_1,m_2,m_3).\]
\end{lemm}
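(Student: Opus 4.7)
The plan is to recast this as a purely number-theoretic statement about the $\mathrm{SL}(3,\mathbb{Z})$-action on $\mathbb{Z}^3$: namely, that any triple $(m_1,m_2,m_3)$ can be moved by a suitable $\mathrm{SL}(3,\mathbb{Z})$-element to the triple $(m,0,0)$, where $m=\gcd(m_1,m_2,m_3)$. Since $H^2(\mathbb{T}^3,\mathbb{Z})\cong\mathbb{Z}^3$ with basis $[dx_1\wedge dx_2], [dx_2\wedge dx_3], [dx_3\wedge dx_1]$, the change of coordinates $y=Bx$ with $B\in\mathrm{SL}(3,\mathbb{Z})$ induces an automorphism of $H^2(\mathbb{T}^3,\mathbb{Z})$, and one expects this automorphism to realize the full $\mathrm{SL}(3,\mathbb{Z})$-action.

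First, I would carefully compute the induced action on coefficients. Using $dx_i=\sum_j (B^{-1})_{ij}\,dy_j$, each $dx_i\wedge dx_j$ rewrites as a $\mathbb{Z}$-linear combination of $dy_k\wedge dy_l$ whose coefficients are $2\times2$ minors of $B^{-1}$. Under the Hodge-star/cross-product identification $\Lambda^2\mathbb{R}^3\cong\mathbb{R}^3$, the formula $(Bu)\times(Bv)=\det(B)B^{-T}(u\times v)$ shows that the action on the coefficient vector $(m_1,m_2,m_3)\in\mathbb{Z}^3$ is (up to reindexing) the standard action by some element of $\mathrm{SL}(3,\mathbb{Z})$ -- specifically, either $B$, $B^T$, $B^{-T}$, or a composition with a permutation coming from the non-cyclic order $12,23,31$ used in the basis. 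The exact form is irrelevant, since $\mathrm{SL}(3,\mathbb{Z})$ is stable under inverse and transpose.

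Second, I would invoke the classical fact that $\mathrm{SL}(n,\mathbb{Z})$ acts transitively on primitive vectors of $\mathbb{Z}^n$. Writing $(m_1,m_2,m_3)=m(a_1,a_2,a_3)$ with $\gcd(a_1,a_2,a_3)=1$, there exists $A\in\mathrm{SL}(3,\mathbb{Z})$ mapping $(a_1,a_2,a_3)^T$ to $e_1=(1,0,0)^T$. This $A$ can be constructed explicitly from iterated B\'ezout identities: first choose $p,q,d$ with $pa_2+qa_3=d$ and use a block $\mathrm{SL}(2,\mathbb{Z})$-matrix to replace $(a_2,a_3)$ by $(d,0)$; then do the same with the new first two coordinates. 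Pulling this back through Step 1, one obtains the desired $B\in\mathrm{SL}(3,\mathbb{Z})$ (possibly as $A^{-T}$ or $A^T$, again by convention), and in the new coordinates the Chern class becomes $m[dy_1\wedge dy_2]$.

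The main obstacle is Step 1, namely pinning down the precise correspondence between the linear action of $B$ on $\mathbb{R}^3$ and the induced action on the coefficient triple, with all signs and index conventions consistent. Once that bookkeeping is settled, the proof is reduced to the transitivity of $\mathrm{SL}(3,\mathbb{Z})$ on primitive vectors of $\mathbb{Z}^3$, which is elementary (e.g.\ via Smith normal form or iterated B\'ezout).
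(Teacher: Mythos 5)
Your proposal is correct and follows essentially the same route as the paper: the paper also reduces the statement to extending the primitive vector $\frac{1}{m}(m_1,m_2,m_3)$ to a unimodular matrix $A\in\mathrm{SL}(3,\mathbb Z)$ (i.e.\ transitivity of $\mathrm{SL}(3,\mathbb Z)$ on primitive vectors), and identifies the induced action on the basis $dx_i\wedge dx_j$ of $H^2(\mathbb T^3;\mathbb Z)$ with the adjugate $\mathrm{adj}(B)=B^{-1}$, which is exactly your cross-product identity $(Bu)\times(Bv)=\det(B)\,B^{-T}(u\times v)$, before taking $B=A^{-T}$. The sign/index bookkeeping you flag as the main obstacle is handled in the paper by the explicit $2\times 2$-minor expansion of $dy_i\wedge dy_j$ in the cyclic ordering $(12,23,31)$, and, as you note, the precise variant ($B$, $B^{T}$, $B^{-T}$) is immaterial since $\mathrm{SL}(3,\mathbb Z)$ is closed under transpose and inverse.
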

\begin{proof}
We set
\[ v_1 := \frac{1}{m}(m_1,m_2,m_3) \in \mathbb Z^3.\]
Since $v_1$ is a primitive vector, we can build a unimodular matrix $A := (v_1,v_2,v_3) \in \operatorname{SL}(3,\mathbb Z),$ see  \cite[Lemma $1$]{S60}.

Thus, we have $A e_1 = v_1 $  and $(m_1,m_2,m_3)A^{-T}=(m,0,0)$. 
We want to find coordinates $y:=Bx$ with $B\in \mathrm{SL}(3,\ZZ)$ such that
\[\begin{pmatrix} dx_1 \wedge dx_2, \ dx_2 \wedge dx_3, \ dx_3 \wedge dx_1 \end{pmatrix}^T= A^{-T} \begin{pmatrix}dy_1 \wedge dy_2,  \ dy_2 \wedge dy_3 , \ dy_3 \wedge dy_1\end{pmatrix}^T.\]
Since
\[ dy_i \wedge dy_j =\left(\sum_{k} B_{ik} dx_k\right) \wedge \left(\sum_{l} B_{jl} dx_l\right) = \sum_{k<l} \det\begin{pmatrix}
    B_{ik} & B_{il} \\ B_{jk} & B_{jl} \end{pmatrix} dx_{k} \wedge dx_l,\]
and the adjugate matrix $\mathrm{adj}(B)=B^{-1}$, we have
\[\begin{pmatrix}dy_1 \wedge dy_2,  \ dy_2 \wedge dy_3 , \ dy_3 \wedge dy_1\end{pmatrix}^T=B^{-1}
\begin{pmatrix} dx_1 \wedge dx_2, \ dx_2 \wedge dx_3, \ dx_3 \wedge dx_1 \end{pmatrix}^T. \]
We take $B=A^{-T}$. The Chern class in the $y$ coordinate is $c_1 = m[dy_1\wedge dy_2].$
\end{proof}

In the following we shall assume that the Chern class is of the form described in Lemma \ref{lem:Oltman-Stubbs}. Recall that by Proposition \ref{prop:chern}, complex vector bundles over $\TT^3$ are also classified by the first Chern class. We now extend the construction of Wannier functions in Section \ref{sec:t2} to non-trivial Bloch bundles over $\mathbb T^3$. As in Section \ref{sec:t2}, we will make the construction on a line bundle $\mathcal{E}$ with the first Chern class $(m,0,0)$. We first construct a section of the complex line bundle $\mathcal{E}$ over $\TT^3$, using Proposition \ref{prop:vanish}:

\begin{prop}
\label{prop:vanish-t3}
Let \(\mathcal{E}\) be a smooth complex line bundle over $\mathbb{T}^3=\RR^3/\ZZ^3$. If the Chern number is given by $c_1(\mathcal{E}) = (m,0,0) \in \mathbb{Z}^3$, then there exists a smooth section \(s: \mathbb{T}^3 \to \mathcal{E}\) such that the section only vanishes on a curve $\gamma(t)=(\gamma_1(t),\gamma_2(t),t)=(a\cos(2\pi t), a\sin(2\pi t),t)$ for some sufficiently small $a>0$. Moreover,
\begin{equation*}
    s(k_1,k_2,t)=(k_1+ik_2-\gamma_1-i\gamma_2(t))^m
\end{equation*}
near the curve $\gamma(t)$.
\end{prop}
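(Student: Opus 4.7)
The plan is to mirror the strategy of Proposition \ref{prop:vanish}, replacing the isolated zero at a point with the zero curve $\gamma$ and gluing across a tubular collar rather than a planar annulus. Fix $a>0$ small enough that an open tubular neighborhood $U$ of $\gamma$ in $\mathbb{T}^3$ is diffeomorphic to a solid torus $S^1\times D^2$, with the disc factor given by $(k_1-\gamma_1(t),k_2-\gamma_2(t))$. Since $U$ deformation retracts onto $\gamma\cong S^1$ and complex line bundles over $S^1$ are trivial, $\mathcal{E}|_U$ is smoothly trivial, and in such a trivialization we define
\[ s_{\text{loc}}(k_1,k_2,t):= (k_1+ik_2-\gamma_1(t)-i\gamma_2(t))^m, \]
a smooth section of $\mathcal{E}|_U$ vanishing to order $m$ exactly along $\gamma$.

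The main topological step is to show that $\mathcal{E}|_{\mathbb{T}^3\setminus \bar U}$ is trivial. Since complex line bundles are classified by the first Chern class, it suffices to check that $c_1(\mathcal{E})$ restricts to $0$ in $H^2(\mathbb{T}^3\setminus \bar U;\mathbb{Z})$. Combining the long exact sequence of the pair $(\mathbb{T}^3,\mathbb{T}^3\setminus \bar U)$ with excision and the Thom isomorphism for the normal bundle of $\gamma$, the image of $H^2(U,\partial U)\cong H^0(\gamma)\cong\mathbb{Z}$ in $H^2(\mathbb{T}^3)\cong \mathbb{Z}^3$ is generated by the Poincar\'e dual of $[\gamma]$. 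Since the transverse circle $t\mapsto (a\cos 2\pi t,a\sin 2\pi t)$ is contractible in the $(k_1,k_2)$-factor for $a$ small, we have $[\gamma]=[\{0\}\times S^1_t]\in H_1(\mathbb{T}^3)$, whose Poincar\'e dual is $[dk_1\wedge dk_2]$. Hence $c_1(\mathcal{E})=m[dk_1\wedge dk_2]$ is killed by restriction to $\mathbb{T}^3\setminus \bar U$, and $\mathcal{E}$ admits a smooth nowhere-vanishing section $s_{\text{out}}$ there.

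Finally, glue $s_{\text{loc}}$ and $s_{\text{out}}$ on an annular collar $U'\setminus \bar U$ with $U'\Supset \bar U$, in analogy with Proposition \ref{prop:vanish}. Parametrize $\partial U\cong T^2$ by $(t,\theta)$ with $\theta$ the angular coordinate transverse to $\gamma$; then $s_{\text{loc}}|_{\partial U}=\epsilon^m e^{im\theta}$ in the inner trivialization. The transition function between the inner and outer trivializations of $\mathcal{E}$ on the collar is a map $\partial U\to \mathbb{C}^*$ whose winding class in $H^1(\partial U;\mathbb{Z})$ encodes $c_1(\mathcal{E})$ via the Mayer--Vietoris connecting homomorphism, so its $\theta$-winding equals $m$ and exactly cancels that of $s_{\text{loc}}|_{\partial U}$. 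After multiplying $s_{\text{out}}$ by a suitable nowhere-vanishing function on $\mathbb{T}^3\setminus \bar U$ to absorb any remaining $t$-winding, $s_{\text{loc}}|_{\partial U}$ and $s_{\text{out}}|_{\partial U}$ become freely homotopic through nowhere-vanishing sections; interpolating via this homotopy on the collar and mollifying yields a smooth global section that equals $s_{\text{loc}}$ near $\gamma$ and vanishes only along $\gamma$. The hard part is the vanishing of the first Chern class on the complement of $\bar U$; once that is established, the boundary winding analysis and gluing proceed exactly as in the two-dimensional case.
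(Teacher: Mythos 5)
Your proof is correct, but it takes a genuinely different route from the paper's. The paper exploits the fact that a line bundle with $c_1=(m,0,0)$ is isomorphic to the pullback $\pi^*L_m$ under the projection $\pi:\mathbb{T}^3\to\mathbb{T}^2_{k_1,k_2}$: it pulls back the two-dimensional section of Proposition \ref{prop:vanish} to get an outer section that is literally $k^m$ near the axis $(0,0,t)$, and then glues it to $s_{\mathrm{local}}=(k_1+ik_2-\gamma_1(t)-i\gamma_2(t))^m$ by a straight convex combination on the annulus $c_0\le|k|\le 2c_0$, verifying nonvanishing there by the explicit estimate $|s_{\mathrm{local}}-k^m|\le |k|^m\max\bigl((1-(1-a/c_0)^m,\,(1+a/c_0)^m-1\bigr)$ for $a$ small. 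You instead prove triviality of $\mathcal{E}$ on the complement of the solid torus by a cohomological argument (long exact sequence of the pair, Thom isomorphism, and the identification of the image of $H^2(U,\partial U)\to H^2(\mathbb{T}^3)$ with $\mathbb{Z}\cdot\mathrm{PD}[\gamma]=\mathbb{Z}[dk_1\wedge dk_2]$), and then glue by a homotopy after matching both winding numbers on the boundary torus. Your extra step of absorbing the $t$-winding by multiplying $s_{\mathrm{out}}$ by $e^{-2\pi i q k_3}$ is exactly the point where the three-dimensional case differs from the two-dimensional one, and it is handled correctly; this issue simply never arises in the paper's argument because both sections being glued are explicitly close to $k^m$ on the gluing region. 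The trade-off: the paper's construction is more elementary and quantitative (only the 2D topology plus a pointwise estimate), while yours is softer and more conceptual, identifying precisely which homological condition ($c_1(\mathcal{E})=m\,\mathrm{PD}[\gamma]$) makes the construction possible, and so would adapt to zero sets along other curves in the class $(0,0,1)$. The only place that deserves an explicit word in your write-up is the orientation/sign bookkeeping in the identification $\delta(d\theta)=[dk_1\wedge dk_2]$ and in the direction of the transition function, so that the local model comes out as $(k_1+ik_2-\gamma_1(t)-i\gamma_2(t))^{m}$ rather than its conjugate; this is the same convention already implicit in the two-dimensional case.
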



\begin{proof}
Since the line bundle $\mathcal{E}$ has Chern number $(m,0,0)$, it is isomorphic to the pullback bundle $\pi^*L_m$ where $\pi:\RR^3/\ZZ^3\to \RR^2/\ZZ^2$ is the projection onto the first two coordinates and $L_m$ is the line bundle with Chern number $m$ on $\RR^2/\ZZ^2$. By Proposition~\ref{prop:vanish}, we can find a section of $L_m$ that only vanishes at $(0,0)$ and has the form $k^m$ near $(0,0)$. By pulling back this section to $\RR^3/\ZZ^3$, we get a section $s_{\mathrm{outer}}$ of $\mathcal{E}$ that only vanishes on the curve $(0,0,t)$, $t\in\RR/\ZZ$ and has the form $k^m$ near $(0,0,t)$. Note $\mathcal{E}$ restricted to a tubular neighbourhood $D:=\{(k_1,k_2,t): |(k_1,k_2)|\leq 2c_0, t\in\RR/\ZZ\}$ of the helix is trivial.
Thus we construct a local section of the form
\begin{equation*}
    s_{\mathrm{local}}(k_1,k_2,t):=(k_1+ik_2-\gamma_1(t)-i\gamma_2(t))^m,\quad (k_1,k_2,t)\in D.
\end{equation*}
We now take
\begin{equation*}
    s(k_1,k_2,t)=
    \begin{cases}
        s_{\mathrm{local}}(k_1,k_2,t),\ &|k|\leq c_0 \\
        \frac{2c_0-|k|}{c_0}s_{\mathrm{local}}(k_1,k_2,t)+\frac{|k|-c_0}{c_0}s_{\mathrm{outer}}(k_1,k_2,t),\ & c_0 \leq |k|\leq 2c_0\\
        s_{\mathrm{outer}}(k_1,k_2,t),\ & (k_1,k_2,t)\notin D.
    \end{cases}
\end{equation*}
Now we show that the section $s(k_1,k_2,t)$ vanishes only at the helix $\gamma$. Since
\begin{equation*}
\begin{split}
    \left|s_{\mathrm{local}}(k_1,k_2,t)-k^m\right|&=|k|^m\left|\left(1-\frac{\gamma_1(t)+i\gamma_2(t)}{k_1+ik_2}\right)^m-1\right|\\
    & \leq |k|^m\max\left((1-(1-a/c_0)^m,(1+a/c_0)^m-1\right),
\end{split}
\end{equation*}
by taking $a$ sufficiently small, we have
\begin{equation*}
    |s(k_1,k_2,t)|\geq |k|^m - |k|^m\max\left((1-(1-a/c_0)^m,(1+a/c_0)^m-1\right))>0,\quad c_0\leq |k|\leq 2c_0.
\end{equation*}
Convolving with a smooth approximation of identity near the gluing region gives a smooth section that only vanishes at the helix and has the form
\begin{equation*}
     s_{\mathrm{local}}(k_1,k_2,t):=(k_1+ik_2-\gamma_1(t)-i\gamma_2(t))^m
\end{equation*}
near the helix.
\end{proof}


Now using the section constructed in Proposition \ref{prop:vanish-t3}, we construct Wannier functions and compute the decay of the Wannier functions for non-trivial complex line bundles over $\TT^3$.

\begin{proof}[Proof of Theorem \ref{theo:becker}]

Similar to the proof of Theorem \ref{theo:thouless}, we first normalize the section constructed in Proposition \ref{prop:vanish-t3}. The normalized section is smooth everywhere away from the singularity at the helix curve $\gamma$ given by 
$$\frac{(k_1+ik_2-\gamma_1(t)-i\gamma_2(t))^m}{|k_1+ik_2-\gamma_1(t)-i\gamma_2(t)|^m}\Phi(k_1,k_2,t,x)$$
in a tubular neighborhood $D$ of $\gamma$, where $\Phi(k_1,k_2,t,x)$ is a local normalized smooth section. Let $\chi\in C_c^{\infty}([0,\infty))$ be a smooth cutoff function such that $\chi= 1$ near $0$. We again use \eqref{eq:Wannier} to recover the Wannier function from the Bloch function and reduce the integration over $\TT^2_{k_1,k_2}$ to the integration over $\mathbb R^2$, due to the presence of the cutoff function. The Wannier function is given by
\begin{equation}
\label{eq:wannier3}
\begin{split}
    w(\varphi)(x) = &
    \int_0^1\int_{\RR^2}e^{i(k_1x_1+k_2x_2+tx_3)} \frac{(k_1+ik_2-\gamma_1(t)-i\gamma_2(t))^m}{|k_1+ik_2-\gamma_1(t)-i\gamma_2(t)|^m} \\
    & \chi(|k_1+ik_2-\gamma_1(t)-i\gamma_2(t)|)\Phi(k_1,k_2,t,x)\, dk_1\,dk_2\, dt+\mathcal O(|x|^{-\infty})\\
    = &
    \int_0^1\int_{\RR^2}e^{i(k_1x_1+k_2x_2+tx_3)} \frac{(k_1+ik_2-\gamma_1(t)-i\gamma_2(t))^m}{|k_1+ik_2-\gamma_1(t)-i\gamma_2(t)|^m} \\
    & \chi(|k_1+ik_2-\gamma_1(t)-i\gamma_2(t)|)\Phi(\gamma(t),x)\, dk_1\,dk_2\, dt\\
    + &
    \int_0^1\int_{\RR^2}e^{i(k_1x_1+k_2x_2+tx_3)} \frac{(k_1+ik_2-\gamma_1(t)-i\gamma_2(t))^m}{|k_1+ik_2-\gamma_1(t)-i\gamma_2(t)|^m} \\
    & \chi(|k_1+ik_2-\gamma_1(t)-i\gamma_2(t)|)(\Phi(k_1,k_2,t,x)-\Phi(\gamma(t),x) )\, dk_1\,dk_2\, dt+\mathcal O(|x|^{-\infty})
\end{split}
\end{equation}

We consider the Fourier transform
\begin{equation*}
\begin{aligned}
    \mathcal{I}_{\gamma,\chi,\Phi}(x)=& \int_{0}^1\int_{\RR^2} \frac{(k_1+ik_2-\gamma_1(t)-i\gamma_2(t))^m}{|k_1+ik_2-\gamma_1(t)-i\gamma_2(t)|^m} e^{i(k_1 x_1+k_2 x_2+t x_3)} \\
    &\chi(|k_1+ik_2-\gamma_1(t)-i\gamma_2(t)|) \Phi(\gamma(t),x) dk_1dk_2dt\\
    = & \int_0^{\infty} \int_0^{2\pi}\int_0^1 e^{im\theta} e^{i( x_1(\gamma_1(t)+\cos\theta)+ x_2(\gamma_2(t)+\sin\theta)+t x_3)}\chi(r) \Phi(\gamma(t),x) rdrd\theta dt\\
    = & \int_0^{\infty} \int_0^{2\pi} e^{im\theta} e^{i(x_1\cos\theta+x_2\sin\theta))}\chi(r)rdrd\theta \cdot \int_0^1 e^{i(\gamma_1(t) x_1+\gamma_2(t) x_2+t x_3)}\Phi(\gamma(t),x)dt.
\end{aligned}
\end{equation*}
By \eqref{eq:computation}, the first term is estimated by
\begin{equation}
\label{eq:decay12}
    \int_0^{\infty} \int_0^{2\pi} e^{im\theta} e^{i(x_1\cos\theta+x_2\sin\theta))}\chi(r)rdrd\theta = \mathcal{O}((1+|x_1|+|x_2|)^{-2}). 
\end{equation}
Consider 
\begin{equation*}
    I_k(x)=\{t\in \RR/\ZZ:\frac{d^k}{dt^k}(\gamma_1(t)x_1+\gamma_2(t)x _2+tx_3)\neq 0\},\quad k=1,2,3,\quad x\neq 0.
\end{equation*}
Since $\gamma'(t),\gamma''(t),\gamma'''(t)$ are linearly independent, we have $I_1(x)\cup I_2(x)\cup I_3(x)=\RR/\ZZ$. Take a smooth partition of unity $1=\chi_1(t)+\chi_2(t)+\chi_3(t)$ on $\RR/\ZZ$ with $\supp \chi_k(t)\subset I_k(x)$. Note this partition of unity may depend on $x$, but we can choose it locally uniformly in terms of $\frac{x}{|x|}$.
By van der Corput lemma,
\begin{equation*}
    \int_0^1 e^{i(\gamma_1(t)x_1+\gamma_2(t)x _2+tx_3)} \chi_k(t)\Phi(\gamma(t),x)dt = \mathcal O(|x|^{-1/k}),\ k=1,2,3.
\end{equation*}
Moreover, when $|x_1|+|x_2|\ll |x_3|$, we have $\RR/\ZZ=I_1(x)$ and nonstationary phase gives
\begin{equation*}
    \int_0^1 e^{i(\gamma_1(t)x_1+\gamma_2(t)x _2+tx_3)} \chi_1(t)\Phi(\gamma(t),x)dt = \mathcal O(|x|^{-\infty}).
\end{equation*}
Therefore we conclude
\begin{equation*}
    \mathcal{I}_{\gamma,\chi,\Phi}(x) = \mathcal O(|x|^{-7/3}). 
\end{equation*}
For the second term in \eqref{eq:wannier3}, we have
\begin{equation*}
    \Phi(k_1,k_2,t,x)-\Phi(\gamma(t),x)=\sum\limits_{0<|\alpha|\leq N_0}\frac{(k_1-\gamma_1(t))^{\alpha_1}(k_2-\gamma_2(t))^{\alpha_2}}{\alpha!}\partial_k^{\alpha}\Phi(\gamma(t),x)+R_{N_0}(k_1,k_2,t,x)
\end{equation*}
where $\frac{(k_1+ik_2-\gamma_1(t)-i\gamma_2(t))^m}{|k_1+ik_2-\gamma_1(t)-i\gamma_2(t)|^m}R_{N_0}(k_1,k_2,t,x)\in C^{N_0}$.
By the similar estimates as before, we have
\begin{equation*}
    \begin{split}
    &\int_0^1\int_{\RR^2}e^{i(k_1x_1+k_2x_2+tx_3)} \frac{(k_1+ik_2-\gamma_1(t)-i\gamma_2(t))^m}{|k_1+ik_2-\gamma_1(t)-i\gamma_2(t)|^m} \\
    & \chi(|k_1+ik_2-\gamma_1(t)-i\gamma_2(t)|)(k_1-\gamma_1(t))^{\alpha_1}(k_2-\gamma_2(t))^{\alpha_2}\partial_k^{\alpha}\Phi(\gamma(t),x)\, dk_1\,dk_2\, dt\\
    =& \mathcal{O}(|x|^{-7/3-|\alpha|}).
    \end{split}
\end{equation*}
Moreover, 
\begin{equation*}
\begin{split}
&\int_0^1\int_{\RR^2}e^{i(k_1x_1+k_2x_2+tx_3)} \tfrac{(k_1+ik_2-\gamma_1(t)-i\gamma_2(t))^m}{|k_1+ik_2-\gamma_1(t)-i\gamma_2(t)|^m} R_{N_0}(k_1,k_2,t,x) \chi(|k_1+ik_2-\gamma_1(t)-i\gamma_2(t)|)\, dk_1\,dk_2\, dt \\
&= \mathcal{O}(|x|^{-N_0}). 
\end{split}
\end{equation*}
Taking $N_0=3$ finishes the proof.
\end{proof}

If we allow anisotropic decay, then we may also construct a section of the complex line bundle $\mathcal{E}$ vanishing along $\gamma(t) = (0,0,t), t\in [0,1]$ such that 
\begin{equation*}
s(k_1,k_2,t)=(k_1+ik_2)^m
\end{equation*}
near the curve $\gamma(t)$. By separation of variables, proof of Theorem \ref{theo:thouless}, and non-stationary phase, we can conclude the following

\begin{corr}
\label{cor:Becker}
There exists a Wannier function $w(\varphi)$ for a complex Bloch line bundle which exhibits $\mathcal{O}((1+|x_1|+|x_2|)^{-2}\langle x_3\rangle^{-\infty})$ decay.
\end{corr}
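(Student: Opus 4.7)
The plan is to reduce the three-dimensional problem to a two-dimensional problem (handled by Theorem \ref{theo:thouless}) times a one-dimensional non-stationary phase integral. The key point is that the singular set is now a straight vertical line in $k$-space, so the singular factor depends only on $(k_1,k_2)$ and the $k_3$-integration is over a smooth compactly supported function.

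First I would construct a section analogous to Proposition \ref{prop:vanish-t3} but vanishing on the straight helix $\gamma(t)=(0,0,t)$ with $t\in [0,1]\cong \RR/\ZZ$. Since the line bundle $\mathcal E$ with Chern class $(m,0,0)$ is isomorphic to $\pi^* L_m$ where $\pi:\RR^3/\ZZ^3 \to \RR^2/\ZZ^2$ is the projection onto the first two coordinates, we can take the pullback of the section constructed in Proposition \ref{prop:vanish} and obtain a smooth section $s$ of $\mathcal E$ whose only zeros lie on $\gamma$ and which satisfies $s(k_1,k_2,t) = (k_1+ik_2)^m$ in a tubular neighborhood of $\gamma$. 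After normalization, the resulting global section has the form
\begin{equation*}
    \frac{(k_1+ik_2)^m}{|k_1+ik_2|^m}\,\Phi(k_1,k_2,t,x)
\end{equation*}
near $\gamma$, with $\Phi$ smooth and $\ZZ$-periodic in $t$.

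Next I would plug this into the Wannier transform \eqref{eq:Wannier}. Using a cutoff $\chi\in C_c^\infty([0,\infty))$ with $\chi\equiv 1$ near $0$ to localize near the singular set, the contribution from the smooth part $1-\chi$ gives $\mathcal O(|x|^{-\infty})$ by repeated integration by parts, and the remaining contribution is
\begin{equation*}
    \int_0^1\int_{\RR^2} e^{i(k_1 x_1+k_2 x_2+t x_3)}\,\frac{(k_1+ik_2)^m}{|k_1+ik_2|^m}\,\chi(|k_1+ik_2|)\,\Phi(k_1,k_2,t,x)\,dk_1\,dk_2\,dt.
\end{equation*}
The crucial observation is that the singular factor is independent of $t$, so the integral separates after Taylor expanding $\Phi$ in $(k_1,k_2)$ about $(0,0)$. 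The leading term yields a product
\begin{equation*}
    \Bigl(\int_{\RR^2} e^{i(k_1 x_1+k_2 x_2)}\tfrac{(k_1+ik_2)^m}{|k_1+ik_2|^m}\chi(|k_1+ik_2|)\,dk_1\,dk_2\Bigr)\cdot\Bigl(\int_0^1 e^{itx_3}\Phi(0,0,t,x)\,dt\Bigr).
\end{equation*}

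The first factor is exactly the integral computed in the proof of Theorem \ref{theo:thouless} and gives $\mathcal O((1+|x_1|+|x_2|)^{-2})$. The second factor, being the Fourier coefficient of a smooth $\ZZ$-periodic function of $t$ (with the $x$-dependence playing no role in the $t$-variable of integration, as $\Phi$ is smooth jointly), decays faster than any power of $\langle x_3\rangle^{-1}$ by integration by parts in $t$. The higher-order terms in the Taylor expansion of $\Phi$ bring in extra factors of $k_1,k_2$, which only improve the $(x_1,x_2)$-decay via the computation in the proof of Theorem \ref{theo:thouless}, while the $t$-integrand remains smooth and periodic, so the non-stationary phase argument still delivers $\langle x_3\rangle^{-\infty}$. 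Finally, the remainder in the Taylor expansion of $\Phi$ (of order $N_0$) contributes an error $\mathcal O(|x|^{-N_0})$ using the same reasoning as in the proof of Theorem \ref{theo:becker}, and since $N_0$ is arbitrary this is also absorbed into the claimed bound.

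There is no serious obstacle here since the separation of variables makes both factors routine; the only mild subtlety is verifying that the integration by parts in $t$ can be performed uniformly in the other parameters, which follows from the joint smoothness of $\Phi(0,0,t,x)$ in $(t,x)$ and its $\ZZ$-periodicity in $t$.
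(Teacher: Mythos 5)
Your proposal is correct and follows essentially the same route as the paper, which justifies Corollary \ref{cor:Becker} in one sentence by exactly these ingredients: a section vanishing on the straight line $(0,0,t)$ with local form $(k_1+ik_2)^m$ (the pullback $\pi^*L_m$ construction already appears as $s_{\mathrm{outer}}$ in the proof of Proposition \ref{prop:vanish-t3}), separation of variables after Taylor expanding $\Phi$ in $(k_1,k_2)$, the two-dimensional computation from Theorem \ref{theo:thouless} for the first factor, and non-stationary phase (rapid Fourier decay of the smooth periodic $t$-dependence) for the second. Your handling of the Taylor remainder and the uniformity of the $t$-integration by parts is consistent with the paper's treatment in the proof of Theorem \ref{theo:becker}.
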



\begin{thebibliography}{0}


\bibitem[AuKu18]{ak} D. Auckly, and P. Kuchment. \emph{On parseval frames of exponentially decaying composite wannier functions.} Contemporary Mathematics. 2018.








\bibitem[BTY25]{bty}  S.~Becker, Z.~Tao, and M.~Yang.
{\em Fragile topology on solid grounds: a mathematical perspective}, preprint. 






\bibitem[BoTu13]{BT} R.~Bott, and L.~Tu. \emph{Differential forms in algebraic topology.} Vol. 82. Springer Science \& Business Media, 2013.


\bibitem[Br*07]{br} C. Brouder, G. Panati, M. Calandra, C. Mourougane, and N. Marzari. \emph{Exponential localization of Wannier functions in insulators.} Physical Review Letters 98, no. 4 (2007): 046402.

\bibitem[Ch79]{C79} S.~Chern. \emph{Complex Manifolds without Potential Theory}, Springer, (1979).

\bibitem[CMM19]{cmm} H. Cornean, D. Monaco, and M. Moscolari. \emph{Parseval frames of exponentially localized magnetic Wannier functions.} Communications in Mathematical Physics 371.3 (2019): 1179-1230.













\bibitem[DLMF]{DLMF} F. W. J. Olver, A. B. Olde Daalhuis, D. W. Lozier, B. I. Schneider, R. F. Boisvert, C. W. Clark, B. R. Miller, B. V. Saunders, H. S. Cohl, and M. A. McClain, eds. \emph{NIST Digital Library of Mathematical Functions.} Release 1.2.2 of 2024-09-15.














\bibitem[Ku08]{ku} P. Kuchment. \emph{Tight frames of exponentially decaying Wannier functions.} Journal of Physics A: Mathematical and Theoretical 42.2 (2008): 025203.

\bibitem[Ku16]{ku2} P. Kuchment. \emph{An overview of periodic elliptic operators.} Bulletin of the American Mathematical Society 53.3 (2016): 343-414.

\bibitem[Li*24]{LDLK24} Q. Li, J. Dong, P. J. Ledwith, and E. Khalaf. \emph{Constraints on real space representations of Chern bands}, \arXiv{2407.02561}. 

\bibitem[LuSt24]{LS24} J. Lu  and K. D. Stubbs. \emph{Algebraic localization of Wannier functions implies Chern triviality in non-periodic insulators.} Annales Henri Poincaré, (2024).

\bibitem[LSW22]{LSW} J.~Lu, K.~D.~Stubbs, and A.~B.~Watson. \emph{Existence and computation of generalized Wannier functions for non-periodic systems in two dimensions and higher.} Archive for Rational Mechanics and Analysis 243.3 (2022): 1269-1323.


\bibitem[Ma*12]{M+12} N. Marzari, A. Mostofi, J. Yates, I. Souza, and D. Vanderbilt. \emph{Maximally localized Wannier functions: Theory and applications}, Rev. Mod. Phys. 84, 1419 – Published 10 October, 2012.

\bibitem[MaVa97]{MV} N. Marzari, and D. Vanderbilt. \emph{Maximally localized generalized Wannier functions for composite energy bands.} Physical review B 56.20 (1997): 12847.


\bibitem[MMP23]{MMP} G. Marcelli, M. Moscolari, and G. Panati. \emph{Localization of Generalized Wannier Bases Implies Chern Triviality in Non-periodic Insulators}, Ann. Henri Poincar\'e 24, 895–930, (2023).


\bibitem[Mo*18]{MPPT18} D. Monaco, G. Panati, A. Pisante, and S. Teufel. \emph{Optimal decay of Wannier functions in Chern and quantum Hall insulators}, Communications in Mathematical Physics, 359, 1, 61-100, 2018.

\bibitem[Ne83]{ne} G. Nenciu. \emph{Existence of the exponentially localised Wannier functions.} Communications in Mathematical Physics, 91, 81-85, 1983. 











\bibitem[Pa07]{P07} G.~Panati. \emph{Triviality of Bloch and Bloch–Dirac Bundles.} Ann. Henri Poincaré 8, 995–1011, (2007).









\bibitem[RoPa24]{RP24} V. Rossi, and G. Panati. \emph{Algebraic localization of generalized Wannier bases implies Roe triviality in any dimension,} \arXiv{2407.14235}.

\bibitem[Sc60]{S60} E. Schenkman. \emph{The Basis Theorem for Finitely Generated Abelian Groups}, Amer. Math. Monthly 67, (1960), 770–771.

\bibitem[Sh64]{S64} K. Shiga. \emph{Some aspects of real-analytic manifolds and differentiable manifolds}, J. Math. Soc. Japan
Vol. 16, No. 2, 1964.


\bibitem[TaZw23]{notes} Z.~Tao, and M.~Zworski. 
{\em PDE methods in condensed matter physics,} \href{https://math.berkeley.edu/~zworski/Notes_279.pdf}{Lecture Notes}, 2023.


\bibitem[Th84]{T84} D.~J.~Thouless. \emph{Wannier functions for magnetic sub-bands}, J. Phys. C: Solid State Phys. 17 L325, (1984).









\end{thebibliography}
\end{document}